\documentclass[a4paper,USenglish,autoref,cleveref,thm-restate]{lipics-v2021}

\usepackage[utf8]{inputenc}
\usepackage[T1]{fontenc}

\usepackage{amssymb,amsmath}
\usepackage{graphicx}
\usepackage{enumerate}
\usepackage{tikz}
\usetikzlibrary{patterns,snakes,shapes}

\usepackage{xcolor}
\usepackage{mathtools}
\usepackage{microtype}
\usepackage{amsfonts}
\usepackage{comment}
\usepackage{mathrsfs}
\usepackage{blindtext}
\usepackage{nicefrac}

\usepackage{trimspaces}
\usepackage{nccfoots}
\usepackage[absolute]{textpos}
\usepackage{cite}
\usepackage{enumerate}
\usepackage[backgroundcolor = blue!50]{todonotes}
\usepackage{longtable}

\definecolor{blue}{rgb}{0.1,0.2,0.5}
\definecolor{brown}{rgb}{0.6,0.6,0.2}
\usepackage{enumerate}
\usepackage{latexsym}
\usepackage{multicol}

% BEGIN ntheorem configuration

\theoremstyle{plain}
\crefformat{cthmin}{#2Theorem~#1#3}
\Crefformat{cthmin}{#2Theorem~#1#3}

\crefformat{ccorin}{#2Corollary~#1#3}
\Crefformat{ccorin}{#2Corollary~#1#3}
%

%\theoremstyle{nonumberplain}
%\theoremheaderfont{\scshape}
%\theorembodyfont{\normalfont}
%\theoremsymbol{\ensuremath{\square}}
%\newtheorem{observation}[theorem]{Observation}
%\newtheorem{conjecture}[theorem]{Conjecture}

%\theoremsymbol{\ensuremath{\lrcorner}}

%\def\cqedsymbol{\ifmmode$\lrcorner$\else{\unskip\nobreak\hfil
%\penalty50\hskip1em\null\nobreak\hfil$\lrcorner$
%\parfillskip=0pt\finalhyphendemerits=0\endgraf}\fi} 
%
%\newcommand{\cqed}{\renewcommand{\qed}{\cqedsymbol}}
%\newcommand{\qedhere}{\hfill$\square$}

% END ntheorem configuration

%%%%%%%%%%%%%%%%%%%%%%

\newcommand{\p}{\mathsf{p}}

\newcommand{\NP}{$\mathsf{NP}$}

\DeclareMathOperator{\dist}{dist}
\DeclareMathOperator{\diam}{diam}

\newcommand{\Prob}{\mathsf{Pr}}
\newcommand{\Exp}{\mathsf{E}}

\newcommand{\N}{\mathbb{N}}

\renewcommand{\phi}{\varphi}
\renewcommand{\epsilon}{\varepsilon}

\newcommand{\Oh}{\mathcal{O}}

%\newcounter{aux}
\renewcommand{\leq}{\leqslant}
\renewcommand{\geq}{\geqslant}

\newcommand{\col}[1]{{#1}-\textsc{Coloring}\xspace}
\newcommand{\lcol}[1]{\textsc{List} {#1}-\textsc{Coloring}\xspace}

\newcommand{\wei}{\mathfrak{w}}

%\newenvironment{claimproof}{\noindent {\emph{Proof of Claim.}}}{\hfill$\blacksquare$\smallskip}

%%%%%%%%%%%%%%%%%%%%%%%%%%%%%%%%%%%%%

%\declaretheorem[sibling=theorem]{lemma}
%\declaretheorem[sibling=theorem]{corollary}
%\declaretheorem[sibling=theorem]{definition}

%-----------------------------------------------------

%To keep the lemma statement but remove all the proofs (un)comment below
%\usepackage{versions}
%\excludeversion{innerproof}
%\excludeversion{proof}

\bibliographystyle{plainurl}% the recommnded bibstyle

\title{Faster 3-coloring of small-diameter graphs}
%\titlerunning{...}

\author{Micha\l{} D\k{e}bski}{Warsaw University of Technology, Faculty of Mathematics and Information Science}{m.piecyk@mini.pw.edu.pl}{}{}
\author{Marta Piecyk}{Warsaw University of Technology, Faculty of Mathematics and Information Science}{m.piecyk@mini.pw.edu.pl}{}{}
\author{Pawe\l{} Rz\k{a}\.zewski}{Warsaw University of Technology, Faculty of Mathematics and Information Science\\
\& University of Warsaw, Institute of Informatics}{p.rzazewski@mini.pw.edu.pl}{https://orcid.org/0000-0001-7696-3848}{}
%\author{~}{~}{}{}{}

\authorrunning{M. D\k{e}bski, M. Piecyk, and P. Rz\k{a}\.zewski}
%\authorrunning{~}

\Copyright{M. D\k{e}bski, M. Piecyk, and P. Rz\k{a}\.zewski}
%\Copyright{~}

\ccsdesc[100]{Mathematics of computing~Graph coloring}
\ccsdesc[100]{Theory of computation~Graph algorithms analysis}
\ccsdesc[100]{Theory of computation~Parameterized complexity and exact algorithms}
\keywords{3-coloring, fine-grained complexity, subexponential-time algorithm, diameter}%mandatory
%
%\category{}%optional, e.g. invited paper
%
%\relatedversion{The full version of the paper is availabe on arXiv~\cite{DBLP:journals/corr/abs-2009-11642}.}
%\relatedversiondetails{Full Version}{https://arxiv.org/abs/2009.11642}
%
%\supplement{}%optional, e.g. related research data, source code, ... hosted on a repository like zenodo, figshare, GitHub, ...
%
\funding{Supported by Polish National Science Centre grant no. 2018/31/D/ST6/00062.}
\acknowledgements{The authors are sincerely grateful to Carla Groenland for many inspiring discussions and useful comments on the manuscript.}

%Editor-only macros:: begin (do not touch as author)%%%%%%%%%%%%%%%%%%%%%%%%%%%%%%%%%%
%\EventEditors{Markus Bl\"{a}ser and Benjamin Monmege}
%\EventNoEds{2}
%\EventLongTitle{38th International Symposium on Theoretical Aspects of Computer Science (STACS 2021)}
%\EventShortTitle{STACS 2021}
%\EventAcronym{STACS}
%\EventYear{2021}
%\EventDate{March 16--19, 2021}
%\EventLocation{Saarbr\"{u}cken, Germany (Virtual Conference)}
%\EventLogo{}
%\SeriesVolume{187}
%\ArticleNo{4}

\nolinenumbers
\hideLIPIcs
%%%%%%%%%%%%%%%%%%%%%%%%%%%%%%%%%%%%%%%%%%%%%%%%%%%%%%
% -- DOCUMENT --
\begin{document}

\maketitle
\begin{abstract}
We study the 3-\textsc{Coloring} problem in graphs with small diameter.
In 2013, Mertzios and Spirakis showed that for $n$-vertex diameter-2 graphs this problem can be solved in subexponential time $2^{\mathcal{O}(\sqrt{n \log n})}$.
Whether the problem can be solved in polynomial time remains a well-known open question in the area
of algorithmic graphs theory. 

In this paper we present an algorithm that solves 3-\textsc{Coloring} in $n$-vertex diameter-2 graphs in time $2^{\mathcal{O}(n^{1/3}  \log^{2} n)}$. This is the first improvement upon the algorithm of Mertzios and Spirakis in the general case, i.e., without putting any further restrictions on the instance graph.

In addition to standard branchings and reducing the problem to an instance of 2-\textsc{Sat}, the crucial building block 
of our algorithm is a combinatorial observation about 3-colorable diameter-2 graphs, which is proven using a probabilistic argument.

As a side result, we show that 3-\textsc{Coloring} can be solved in time $2^{\mathcal{O}( (n \log n)^{2/3})}$ in $n$-vertex diameter-3 graphs.
We also generalize our algorithms to the problem of finding a list homomorphism from a small-diameter graph to a cycle.
\end{abstract}
\newpage

\section{Introduction}
For many \NP-hard graph problems, the instances constructed in hardness reductions are very specific and ``unstructured''.
Thus a natural direction of research is to study how additional restrictions imposed on the input graphs affect the complexity of the problem.
In particular, we would like to understand if the additional knowledge about the structure of the instance makes the problem easier,
and what are the ``minimal'' sets of restrictions that we need to impose in order to make the problem efficiently solvable.

Usually, the main focus in the area is on \emph{hereditary} classes of graphs, i.e., classes that are closed under vertex deletion.
Prominent examples are perfect graphs~\cite{spgt,GROTSCHEL1984325}, graphs excluding a certain induced subgraph~\cite{DBLP:journals/jgt/GolovachJPS17} or minor~\cite{DBLP:conf/focs/DemaineHK05}, and intersection graphs of geometric objects~\cite{DBLP:conf/compgeom/Kratochvil11}.
Studying these classes has led to a better understanding of the structure of such graphs~\cite{DBLP:journals/corr/abs-1901-00335,DBLP:journals/combinatorica/ChudnovskyS10,DBLP:journals/jct/KratochvilM94,DBLP:journals/jct/RobertsonS86} and a discovery of numerous exciting algorithmic techniques~\cite{DBLP:journals/jacm/BonamyBBCGKRST21,DBLP:journals/siamcomp/BergBKMZ20,DBLP:reference/algo/FominDHT16,DBLP:conf/focs/GartlandL20,DBLP:conf/esa/MarxP15}.
Let us point out that the property of being hereditary is particularly useful in the construction of recursive algorithms based on branching or the divide \& conquer paradigm.

However, there are many natural classes of graphs that are not hereditary, for example graphs with bounded diameter.
Such graphs are interesting not only for purely theoretical reasons: for example social networks tend to have small diameter~\cite{DBLP:journals/socnet/Schnettler09}.

Observe that for any graph $G$, a graph $G^+$ obtained from $G$ by adding a universal vertex has diameter 2.
Since the graph $G$ may be arbitrarily complicated, the fact that $G^+$ has small diameter does not imply that its structure is simple.
This observation can be used to show that many classic computational problems are \NP-hard for graphs of bounded diameter and they cannot be solved in subexponential time under the ETH. For instance, the size of a maximum independent set in $G^+$ is equal to the size of a maximum independent set in $G$, and thus \textsc{Max Independent Set} in diameter-2 graph is \NP-hard and cannot be solved in subexponential time, unless the ETH fails.

A similar argument applies to \col{$k$}: the graph $G^+$ is $k$-colorable if and only if $G$ is $(k-1)$-colorable.
Thus, for any $k \geq 4$, the \col{k} problem is \NP-hard and admits no subexponential-time algorithm (under the ETH) in diameter-2 graphs.
However, the reasoning above breaks down for $k=3$, as \col{2} is polynomial-time solvable.

This peculiar open case was first studied by Mertzios, Spirakis~\cite{DBLP:journals/algorithmica/MertziosS16} who proved that the problem can be solved in \emph{subexponential time}. The result holds even for the more general 
\lcol{3} problem, where each vertex $v$ of the instance graph is equipped with a \emph{list} $L(v) \subseteq \{1,2,3\}$, and we ask for a proper coloring, in which every vertex gets a color from its list.

\begin{theorem}[Mertzios, Spirakis~\cite{DBLP:journals/algorithmica/MertziosS16}]\label{thm:MSdiam2}
The \lcol{3} problem on $n$-vertex graphs with diameter 2 can be solved in time $2^{\Oh(\sqrt{n \cdot \log n})}$.
\end{theorem}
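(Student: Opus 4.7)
The plan is to reduce the problem to a polynomial-time solvable instance of \textsc{2-Sat} after exhaustively guessing the coloring of a sufficiently small dominating set of $G$. Set $t := \lceil \sqrt{n \log n}\, \rceil$. First I would show that $G$ admits a dominating set $D$ of size $\Oh(\sqrt{n \log n})$ computable in polynomial time; the remainder of the algorithm is then straightforward list propagation followed by \textsc{2-Sat}.

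For the domination step I split into two regimes. If some vertex $v$ has $\deg(v) < t$, the diameter-$2$ hypothesis guarantees $\dist(u,v) \leq 2$ for every $u \in V(G)$, so $D := N[v]$ dominates $G$ and $|D| \leq t$. Otherwise every vertex has degree at least $t$, and the classical bound $\gamma(G) \leq n(1 + \ln(\delta+1))/(\delta+1)$ for graphs of minimum degree $\delta$, applied with $\delta \geq t$, yields $\gamma(G) = \Oh(\sqrt{n\log n})$; the usual greedy algorithm (repeatedly add the vertex covering the most still-undominated vertices) realizes this bound in polynomial time.

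Given such a $D$, I enumerate all $3^{|D|} = 2^{\Oh(\sqrt{n\log n})}$ colorings of $D$ consistent with the input lists. For each guess I propagate: whenever $v \in D$ is assigned color $c$, I remove $c$ from $L(u)$ for every $u \in N(v)$, aborting the branch if some list becomes empty. Because $D$ dominates $G$, every $u \notin D$ has at least one neighbor in $D$ whose (now fixed) color is deleted from $L(u)$, so after propagation $|L(u)| \leq 2$ for all $u \in V(G)$. The residual instance then encodes as a \textsc{2-Sat} formula in the usual way---one Boolean variable per surviving size-$2$ list, and at most two $2$-clauses per edge forbidding equal colors at its endpoints---and is solved in polynomial time. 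The total running time is $2^{\Oh(\sqrt{n\log n})} \cdot \mathrm{poly}(n) = 2^{\Oh(\sqrt{n\log n})}$.

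The main obstacle to anticipate is the tuning of the threshold $t$: one needs $3^{|D|}$ and the dominating-set size $n\ln(t)/t$ to be simultaneously $2^{\Oh(\sqrt{n\log n})}$, and it is precisely the $n \ln \delta / \delta$-type bound on $\gamma$ that forces the sweet spot $t \approx \sqrt{n \log n}$. The diameter-$2$ observation that any closed neighborhood dominates $G$, the list propagation, and the \textsc{2-Sat} reduction itself are all essentially routine; the non-heredity of the diameter-$2$ property is avoided entirely by not recursing on a modified graph, but rather doing all guessing in a single exponential-sized brute-force step.
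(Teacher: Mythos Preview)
Your proposal is correct and follows essentially the same argument as the paper: both use the win-win between the closed neighborhood of a minimum-degree vertex (which dominates any diameter-$2$ graph) and the classical $\Oh(n\log\delta/\delta)$ bound on $\gamma(G)$, balance the two at threshold $\Theta(\sqrt{n\log n})$, brute-force the coloring of the resulting dominating set, propagate to reduce all lists to size at most two, and finish with Edwards' reduction to \textsc{2-Sat}. There is no substantive difference in strategy or in the analysis.
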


Their algorithm is based on a simple win-win argument.
The first ingredient is a well-known fact that every graph with $n$ vertices and minimum degree $\delta$ has a dominating set of size $\Oh \left (\frac{n \log \delta}{\delta} \right)$~\cite[Theorem~1.2.2]{DBLP:books/daglib/0021015}.
On the other hand, in a diameter-2 graph, the neighborhood of each vertex is a dominating set, so there is a dominating set of size $\delta$.
Thus, every diameter-2 graph has a dominating set $S$ of size $\Oh \left( \min(\delta, \frac{n \log \delta}{\delta}) \right)$ which is upper-bounded by $\Oh(\sqrt{n \log n})$.

We exhaustively guess the coloring of vertices in $S$ and update the lists of their neighbors.
Note that after this, each uncolored vertex has at least one colored neighbor, and thus each list has at most 2 elements.
A classic result by Edwards~\cite{DBLP:journals/tcs/Edwards86} shows that such a problem can be solved in polynomial time by a reduction to 2-\textsc{Sat}. Summing up, the complexity of the algorithm is bounded by $2^{|S|} \cdot n^{\Oh(1)} = 2^{\Oh( \sqrt{n \log n})}$.

Let us point out that the bound $\sqrt{n}$ appears naturally for different parameters of diameter-2 graphs,
for example the maximum degree of such a graph is $\Omega(\sqrt{n})$. 
Based on this, one can also construct different algorithms for \lcol{3} in diameter-2 graphs with running time matching the one of \cref{thm:MSdiam2} (see \cref{sec:diam3}).

If it comes to \col{3} in diameter-3 graphs, Mertzios and Spirakis~\cite{DBLP:journals/algorithmica/MertziosS16} proved that the problem is \NP-hard, but their reduction is quadratic. Thus, under the ETH, the problem cannot be solved in time $2^{o(\sqrt{n})}$.
Actually, the authors carefully analyzed how the lower bound depends on the minimum degree of the input graph,
and presented three hardness reductions, each for a different range of $\delta$.
Furthermore, they showed that the problem can be solved in time $2^{\Oh \left( \min (\delta \cdot \Delta, \frac{n \log \delta}{\delta}) \right)}$, where $\Delta$ is the maximum degree.
The argument again follows from the observation that each diameter-3 graph has a dominating set of size at most $\delta \cdot \Delta$.
Let us point out that if $\Delta = \Theta(n)$ and $\delta = O(1)$, then the running time is exponential in $n$.
In~\cref{fig:diam3} we summarize the results for diameter-3 graphs with given minimum degree.

\begin{figure}
\centering
\includegraphics[scale=0.8]{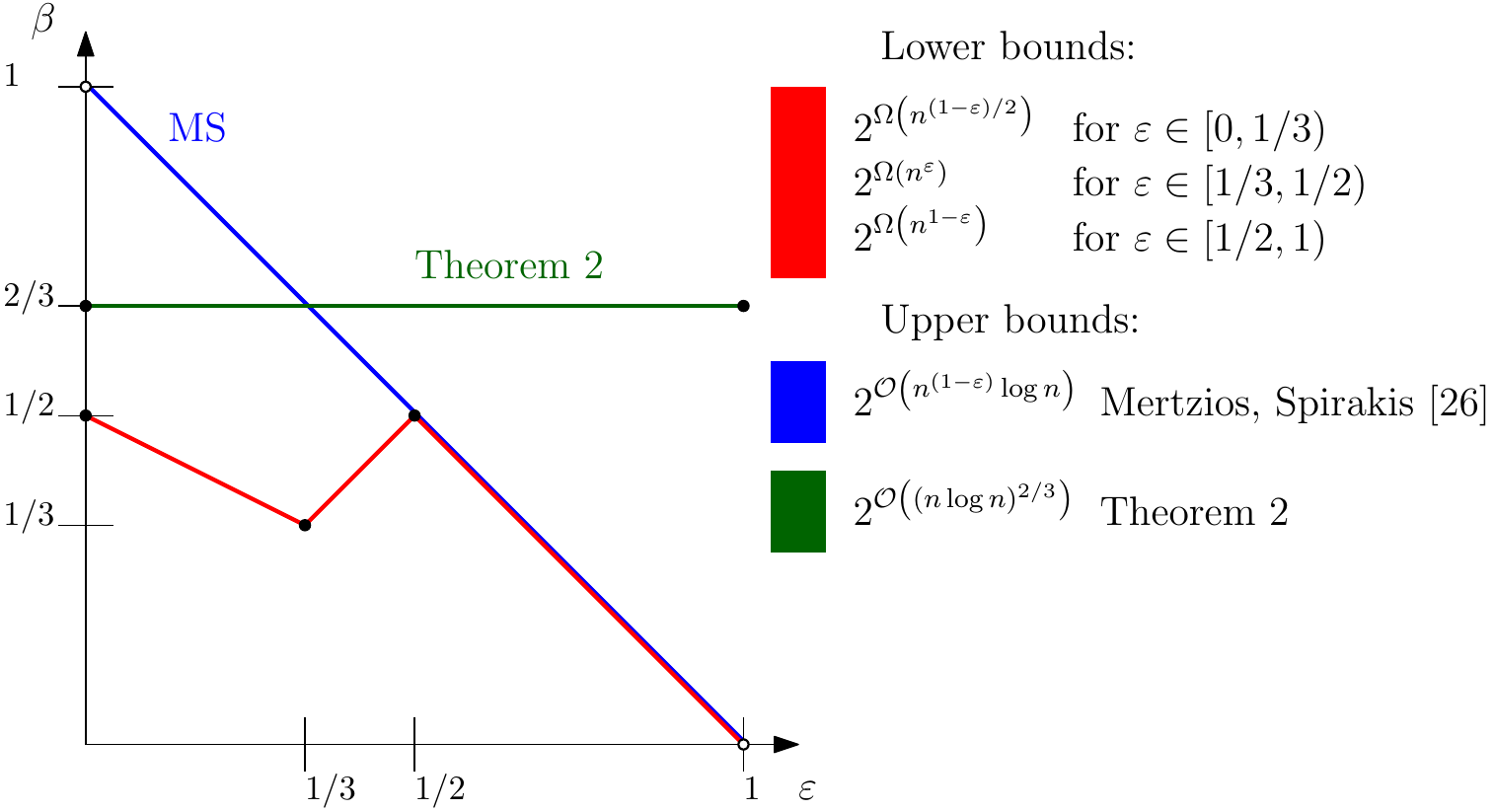}
\caption{The complexity of \lcol{3} in $n$-vertex diameter-3 graphs with minimum degree $\Theta(n^\varepsilon)$ for $\varepsilon \in [0,1]$.
The complexity bound is of the form $2^{\Oh(n^{\beta} \cdot \log^{\Oh(1)} n)}$ for $\beta \in [0,1]$.
}\label{fig:diam3}
\end{figure}

The story stops at diameter 3: a textbook reduction from \textsc{NAE-Sat} to \col{3} builds a graph with diameter 4 and number of vertices linear in the size of the formula~\cite[Theorem~9.8]{DBLP:books/daglib/0072413}.
This proves that the \col{3} problem in diameter-4 graphs is \NP-hard and cannot be solved in subexponential time, unless the ETH fails.

Closing the gaps left by Mertzios and Spirakis~\cite{DBLP:journals/algorithmica/MertziosS16}, and in particular determining the complexity of \col{3} in diameter-2 graphs, is a notorious open problem in the area of graph algorithms.
We know polynomial-time algorithms if some additional restrictions are imposed on the instance~\cite{DBLP:conf/mfcs/MartinPS19,DBLP:journals/corr/abs-2101-07856}.
However, to the best of our knowledge, no progress in the general case has been achieved.

Let us also point out that some other problems, including different variants of graph coloring, have also been studied for small-diameter graphs~\cite{DBLP:journals/ipl/BonamyDFJP18,DBLP:journals/corr/abs-2004-11173,graphmod,DBLP:journals/corr/abs-2104-10593}.

\subparagraph*{Our results.}
As our first result, in \cref{sec:diam3} we show a simple subexponential-time algorithm for the \lcol{3} problem in diameter-3 graphs.
\begin{theorem}\label{thm:diam3}
The \lcol{3} problem on $n$-vertex graphs with diameter 3 can be solved in time $2^{\Oh(n^{2/3} \cdot \log^{2/3} n)}$.
\end{theorem}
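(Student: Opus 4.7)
The plan is a case analysis on the minimum degree $\delta$ of $G$, with threshold $T := n^{1/3}\log^{1/3} n$, following the branch-on-dominating-set plus $2$-\textsc{Sat} template inherited from \cref{thm:MSdiam2}.

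If $\delta \ge T$, I would invoke the classical bound that every graph with $n$ vertices and minimum degree $\delta$ admits a dominating set $S$ of size $O(n \log \delta / \delta) = O(n^{2/3} \log^{2/3} n)$, computable in polynomial time. Enumerating over the $3^{|S|}$ colorings of $S$ reduces every uncolored vertex's list to size at most two, and Edwards' theorem solves the resulting instance in polynomial time via a reduction to 2-\textsc{Sat}, with total cost $2^{O(n^{2/3}\log^{2/3} n)}$.

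If $\delta < T$, pick $v$ of minimum degree and partition $V = \{v\} \cup A \cup B \cup C$ with $A := N(v)$, $B := N^2(v) \setminus N[v]$, and $C := V \setminus N^{\le 2}[v]$. By the diameter-3 assumption every $c \in C$ has $\deg(c) \ge \delta$ and all its neighbors in $B \cup C$. I would branch over the $3^{|N[v]|} \le 3^T$ colorings of $N[v]$; in each branch, every $b \in B$ has a colored neighbor in $A$ and hence a list of size at most two. To then dominate $C$, I would use a probabilistic sampling argument on $B \cup C$: since $|N[c] \cap (B \cup C)| \ge \delta + 1$ for each $c \in C$, sampling with probability $\Theta(\log n/\delta)$ and derandomizing yields in polynomial time a set $D \subseteq B \cup C$ of size $O(n \log n/\delta)$ that dominates $C$. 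A second branching over the $3^{|D|}$ colorings of $D$, followed by Edwards' 2-\textsc{Sat} reduction, finishes the case.

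The main obstacle will be to bound the case-2 cost $2^{O(T + n \log n/\delta)}$ by the target $2^{O(n^{2/3}\log^{2/3} n)}$ uniformly when $\delta$ is much smaller than $T$. Here I would exploit the structural identity $V = \bigcup_{a \in N(v)} N^{\le 2}[a]$, a direct consequence of diameter 3, which forces $n \le O(\delta \cdot \Delta^2)$ and hence $\Delta = \Omega(\sqrt{n/\delta})$. Thus small $\delta$ necessarily entails large $\Delta$, and I would pivot to a vertex $u$ of maximum degree (at distance at most $3$ from $v$), using either $N^{\le 2}[u]$ as an alternative dominating set when its size stays below the target, or the smallness of $|V \setminus N[u]|$ to brute-force a small leftover — keeping the combined exponent within $O(n^{2/3}\log^{2/3} n)$ across all regimes of $(\delta,\Delta)$.
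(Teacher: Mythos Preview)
Your Case 1 ($\delta \ge T$) is fine, but Case 2 has a genuine gap that the proposed pivot does not close. The inequality $n \le \Oh(\delta \Delta^2)$ only gives a \emph{lower} bound $\Delta = \Omega(\sqrt{n/\delta})$; it does not pin down $\Delta$. In the regime $\delta = \Oh(1)$ and $\Delta = \Theta(\sqrt{n})$ --- realizable, for instance, by attaching a single pendant vertex to a diameter-$2$ graph with all degrees $\Theta(\sqrt{n})$ --- neither of your two options applies: $|N^{\le 2}[u]| = \Oh(\Delta^2) = \Oh(n)$ is far above the target, and $|V \setminus N[u]| = n - \Theta(\sqrt{n}) = \Theta(n)$ is also far above it. More generally, the entire band $n^{1/3} \ll \Delta \ll n$ with small $\delta$ is left uncovered, and your Case-2 cost $2^{\Oh(n\log n/\delta)}$ is then exponential. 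A one-shot case split on the global pair $(\delta,\Delta)$ cannot recover the bound; this is exactly the obstruction that made Mertzios and Spirakis' diameter-$3$ algorithm exponential in the worst case.

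The paper avoids the issue by never looking at global $\delta$ or $\Delta$. It works with the layer structure $(V_1,V_2,V_3)$ and a potential $\mu = 2|V_2|+3|V_3|$, and \emph{iteratively} branches: as long as some $v \in V_2 \cup V_3$ has $\ge (\mu\log\mu)^{1/3}$ neighbours in $V_2 \cup V_3$, branch on its colour; one of the branches drops $\mu$ by $\Omega((\mu\log\mu)^{1/3})$, and the recursion resolves to $2^{\Oh((\mu\log\mu)^{2/3})}$ leaves. At a leaf, $\Delta(G[V_2\cup V_3]) < (\mu\log\mu)^{1/3}$, and the key structural observation (\cref{cor:neighborhoods}, a consequence of \cref{prop:distance}) says that for any $v\in V_3$ the set $N^{\le 2}_{G[V_2\cup V_3]}[v]$ already dominates $V_3$; its size is at most $1+\Delta(G[V_2\cup V_3])^2 = \Oh((\mu\log\mu)^{2/3})$, so one last enumeration over its colourings reduces every list to size $\le 2$ and \cref{thm:edwards} finishes. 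The essential idea you are missing is that branching drives the degree down \emph{inside the uncoloured subgraph} $G[V_2\cup V_3]$, which is precisely where you need a small second-neighbourhood ball, rather than in $G$ as a whole.
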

Note that the running time bounds does not depend on the maximum nor the minimum degree of the input graph.
In particular, this is the \emph{first} algorithm for \lcol{3}, whose complexity is subexponential for \emph{all} diameter-3 graphs, see~\cref{fig:diam3}.

Let us present a high-level overview of the proof.
We partition the vertex set of our graph into three sets $V_1,V_2,V_3$, where $V_i$ contains the vertices with lists of size $i$.
If the graph contains a vertex $v \in V_2 \cup V_3$ with at least $n^{1/3}$ neighbors in $V_2 \cup V_3$,
then we can effectively branch on the color of $v$.
Otherwise, we observe that for any $v \in V_2 \cup V_3$, the set $S$ of vertices at distance at most 2 from $v$ in the graph induced by sets $V_2 \cup V_3$ dominates $V_3$, i.e., every vertex from $V_3$ is in $S$ or has a neighbor in $S$.
Thus, after exhaustively guessing the coloring of $S$, all lists are reduced to size at most 2 and then we can finish in polynomial time, using the already-mentioned result of Edwards~\cite{DBLP:journals/tcs/Edwards86}.

\smallskip
In \cref{sec:diam2} we prove the following theorem, which is the  main result of the paper.
\begin{restatable}{theorem}{mainthm}
\label{thm:main}
The \lcol{3} problem on $n$-vertex graphs with diameter 2 can be solved in time $2^{\Oh(n^{1/3} \cdot \log^{2} n)}$.
\end{restatable}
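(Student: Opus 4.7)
The plan is to combine a standard branching step with a new combinatorial observation about 3-colorable diameter-2 graphs, proved by a probabilistic argument, and finish with Edwards' 2-\textsc{Sat} reduction~\cite{DBLP:journals/tcs/Edwards86}. Throughout, I would maintain lists $L(v)\subseteq\{1,2,3\}$, propagate any singleton list, reject on an empty list, and partition the remaining vertices into $V_2, V_3$ by list size. Once $V_3=\emptyset$ the instance reduces to 2-\textsc{Sat} and is solved in polynomial time.

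The first phase is a standard branching designed to trim $V_3$. As long as some $v\in V_3$ has at least $n^{2/3}$ neighbors with lists of size $\geq 2$, I would branch on the color of $v$. With a suitable progress measure (for instance $|V_3|$, or the total list-excess $\sum_u (|L(u)|-1)$), each branch decreases the measure by at least $n^{2/3}$, so the branching tree has depth $\Oh(n^{1/3})$ and contributes a factor $2^{\Oh(n^{1/3})}$ to the running time. After this phase, every surviving $v\in V_3$ has fewer than $n^{2/3}$ neighbors with list size $\geq 2$, which together with list-1 propagation forces the neighborhood of $v$ in $G$ to be small.

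The second phase invokes the key combinatorial observation: in every 3-colorable diameter-2 graph $G$ on $n$ vertices satisfying the phase-one sparsity condition, there exists a set $S\subseteq V(G)$ of size $\Oh(n^{1/3}\log^2 n)$ together with a partial coloring $c_S\colon S\to\{1,2,3\}$ that extends to a valid 3-coloring of $G$, such that after propagating $c_S$ every list has size at most $2$. I would prove this by the probabilistic method: fix any valid 3-coloring $c$ of $G$, sample a random subset $R\subseteq V(G)$ from a carefully chosen distribution (for instance, each vertex independently with some probability $p$), and set $c_S$ to be the restriction of $c$ to $R$; then apply a union bound to show that with positive probability every $V_3$-vertex $v$ either lies in $R$ or has a neighbor in $R$ whose $c$-color lies in $L(v)$. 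The improvement over the generic probabilistic dominating-set bound of $\Omega(\sqrt{n\log n})$ must come from the combination of (i) 3-colorability, which makes the neighborhood of every vertex bipartite under $c$, and (ii) the sparsity produced in phase one. Once the observation is established, the algorithm enumerates all candidate pairs $(S, c_S)$---of which there are $2^{\Oh(n^{1/3}\log^2 n)}$ many---and finishes each leaf with Edwards' algorithm, yielding the claimed total running time.

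The main obstacle is the combinatorial observation itself: beating the $\sqrt{n\log n}$ barrier of Mertzios and Spirakis really does require exploiting 3-colorability (not just diameter-2) in a non-trivial way. The heart of the argument will be a careful probabilistic construction of $S$ that simultaneously balances its size against the number of $V_3$-vertices needing to be resolved and handles all ranges of the minimum degree of $G$, a range analysis which the simpler dominating-set argument cannot accommodate tightly enough.
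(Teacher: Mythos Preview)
Your proposal has the right high-level shape (branching to enforce sparsity, then a probabilistic structural lemma, then Edwards), but the core combinatorial step is missing the idea that actually breaks the $\sqrt{n\log n}$ barrier. Your phase-two claim asks for a single set $S$ of size $\Oh(n^{1/3}\log^2 n)$ whose coloring, once propagated, shrinks \emph{every} list in $V_3$; your proposed proof is to sample $R$ i.i.d.\ and union-bound so that each $v\in V_3$ has a neighbor in $R$. But for $v\in V_3$ the list is all of $\{1,2,3\}$, so ``a neighbor in $R$ whose $c$-color lies in $L(v)$'' is just ``a neighbor in $R$'': you are asking for $R$ to dominate $V_3$. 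That is precisely the generic dominating-set argument, and in a diameter-$2$ graph with $\Delta\approx n^{2/3}$ (which is exactly what your phase-one branching leaves you with) a dominating set must have size $\Omega(n^{1/3})$ only in the best case and $\Theta(n/\Delta\cdot\log\Delta)=\Theta(n^{1/3}\log n)$ is not what the usual bound gives either; in fact when degrees are $\Theta(n^{2/3})$ you can have $|V_3|=\Theta(n)$ and minimum-degree-into-$V_3$ as low as $\Theta(1)$ for many vertices, so i.i.d.\ sampling at rate $p$ needs $p=\Omega(1)$ to hit all of them. Neither bipartiteness of neighborhoods nor the phase-one sparsity alone rescues this.

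The paper's mechanism is genuinely different: it selects \emph{two} small sets $S\subseteq\phi^{-1}(a)$ and $\widetilde{S}\subseteq V\setminus\phi^{-1}(a)$, each of size $\Oh(n^{1/3}\log n)$, and exploits that every vertex in $N(S)\cap N(\widetilde{S})$ then has its color \emph{uniquely determined} (it sees both $a$ and a non-$a$ color). The lemma only guarantees that $S\cup\widetilde{S}\cup(N(S)\cap N(\widetilde{S}))$ dominates a constant \emph{fraction} of $V_3$, not all of it, so the algorithm recurses $\Oh(\log n)$ times. Proving even this weaker statement requires two additional branching rules beyond your B1-analogue: one (B2) ensuring that second neighborhoods inside $V_3$ are almost full, and one (B3) bounding, for every pair $u,v$, how many $w$ share a common neighbor with both --- this last condition is what makes the Chernoff bound go through when $\widetilde{S}$ is sampled inside the neighborhood of a fixed vertex $v_a\in\phi^{-1}(a)$. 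Without the $N(S)\cap N(\widetilde{S})$ amplification and these extra preconditions, your phase-two argument cannot reach $n^{1/3}$.
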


Again, let us give some intuition about the proof.
We partition the vertex set of $G$ into $(V_1,V_2,V_3)$, as previously. We aim to empty the set $V_3$, as then the problem can be solved in polynomial time.
We start with applying three branching rules.
The first one is similar as in the proof of \cref{thm:diam3}: if we find a vertex $v$ with many neighbors in $V_3$, we can branch on choosing the color of $v$. The other two branching rules are somewhat technical and their purpose is not immediately clear, so let us not discuss them here.

The main combinatorial insight that is used in our algorithm is as follows.
Consider an instance $(G,L)$, where $G$ is of diameter 2 and none of the previous branching rules can be applied.
Suppose that $G$ has a proper 3-coloring $\phi$ that respects lists $L$.
Then there is a color $a \in \{1,2,3\}$ and sets $S \subseteq V_3 \cap \varphi^{-1}(a)$ and  $\widetilde{S} \subseteq V_3 \setminus \varphi^{-1}(a)$,
each of size $\Oh(n^{1/3} \log n)$, with the following property: 
\begin{description}
\item[($\star$)] $S \cup \widetilde{S} \cup \left( N(S) \cap N(\widetilde{S}) \right)$ dominates at least $\frac{1}{6}$-fraction of $V_3$,
\end{description}
where $N(S)$ (resp. $N(\widetilde{S})$) denotes the set of vertices with a neighbor in $S$ (resp. $\widetilde{S}$).
The existence of the sets $S$ and $\widetilde{S}$ is shown using a probabilistic argument.

Now we proceed as follows. We enumerate all pairs of disjoint sets $S$ and $\widetilde{S}$, each of size $\Oh(n^{1/3} \log n)$.
If they satisfy the property ($\star$), we exhaustively guess the color $a$ used for every vertex of $S$ and the coloring of $\widetilde{S}$ with colors $\{1,2,3\} \setminus \{a\}$. Then we update the lists of the neighbors of colored vertices.
Note that the color of every vertex from $N(S) \cap N(\widetilde{S})$ is now uniquely determined.
Thus, for at least $\frac{1}{6}$-fraction of vertices $v \in V_3$, they are either already colored or have a colored neighbor and thus their lists are of size at most 2.
Thus our instance was significantly simplified and we can proceed recursively.

\smallskip
Finally, in \cref{sec:ext} we investigate possible extensions of our algorithms to some generalizations of (\textsc{List}) \col{3}.
We observe that our approach can be used to obtain subexponential-time algorithms for the problem of finding a \emph{list homomorphism} from a graph with diameter at most 3 to certain graphs, including in particular all cycles.
We refer to \cref{sec:lhom} for the definition of the problem and the precise statement of our results;
let us just point out that under the ETH the problems considered there cannot be solved in subexponential time in general graphs~\cite{DBLP:journals/combinatorica/FederHH99,DBLP:journals/jgt/FederHH03}

We conclude with discussing the possibility of extending our algorithms to weighted coloring problems, with \textsc{Independent Odd Cycle Transversal}~\cite{BonamyDFJP19} as a prominent special case.

\section{Preliminaries}
For an integer $n$, we denote $[n] := \{1,2,\ldots,n\}$. For a set $X$, by $2^{X}$ we denote the family of all subsets of $X$.
All logarithms in the paper are natural.

Let $G = (V,E)$ be a connected graph.
For two vertices $u$ and $v$, by $\dist_G(u,v)$ we denote the distance from $u$ to $v$, i.e., the number of edges on a shortest $u$-$v$ path in $G$.
The \emph{diameter} of $G$, denoted by $\diam(G)$, is the maximum value of $\dist(u,v)$ over all $u,v \in V$.

For a vertex $v$, by $N_G(v)$ we denote its \emph{open neighborhood}, i.e., the set of all vertices adjacent to $v$.
The \emph{closed neighborhood} of $v$ is defined as $N_G[v] := N_G(v) \cup \{v\}$.
For an integer $p$, by $N^{\leq p}_G[v]$ we denote the set of vertices at distance at most $p$ from $v$, and define $N^{\leq p}_G(v) := N^{\leq p}_G[v] \setminus \{v\}$.
For a set $X$ of vertices, we define $N_G(X) := \bigcup_{v \in X} N_G(v) \setminus X$ and $N_G[X] := N_G(X) \cup X$.
For sets $X,Y \subseteq V$, we say that $X$ \emph{dominates} $Y$ if $Y \subseteq N_G[X]$.
By $\deg_G(v)$ we denote the \emph{degree} of a vertex $v$, i.e., $|N_G(v)|$.

If the graph $G$ is clear from the context, we drop the subscript in the notation above and simply write $\dist(u,v)$, $N(v)$, etc.
By $\Delta(G)$ we denote the maximum vertex degree in~$G$.

The following result by Edwards~\cite{DBLP:journals/tcs/Edwards86} will be an important tool used in all our algorithms.

\begin{theorem}[Edwards~\cite{DBLP:journals/tcs/Edwards86}]\label{thm:edwards}
Let $G=(V,E)$ be a graph and let $L : V \to 2^\N$ be a list assignment, such that for every $v \in V$ it holds that $|L(v)| \leq 2$.
Then in polynomial time we can decide whether $G$ admits a proper vertex coloring that respects lists $L$.
\end{theorem}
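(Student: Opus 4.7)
The plan is to reduce the list coloring problem to 2-\textsc{Sat}, which is solvable in linear time, and then argue the reduction is polynomial in the input size. First, I would preprocess the instance by a saturation loop that propagates forced colors: any vertex $v$ with $|L(v)| = 0$ causes rejection, any vertex $v$ with $|L(v)| = 1$ is assigned its unique color $c$, after which $c$ is removed from $L(u)$ for every neighbor $u$ of $v$, and the loop is repeated until no singleton lists remain or the instance is rejected. This takes polynomial time and reduces to the case in which every remaining vertex has exactly two available colors.

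Next, for each remaining vertex $v$ with $L(v) = \{a_v, b_v\}$, I would introduce a Boolean variable $x_v$ under the convention that $x_v$ being true means ``$v$ receives color $a_v$'' and $x_v$ being false means ``$v$ receives color $b_v$''. The proper coloring constraint on an edge $uv$ states that the colors assigned to $u$ and $v$ must differ. I would encode this edge by edge: for each common color $c \in L(u) \cap L(v)$, the assignment ``$u$ gets $c$ and $v$ gets $c$'' is forbidden, and each such forbidden assignment corresponds to a single 2-clause in the variables $x_u, x_v$ (possibly negated depending on whether $c$ equals $a_u$ or $b_u$, and $a_v$ or $b_v$). Each edge contributes at most two clauses, so the resulting 2-\textsc{Sat} formula $\varphi$ has $\Oh(|V|)$ variables and $\Oh(|E|)$ clauses.

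It is immediate from the construction that satisfying assignments of $\varphi$ are in one-to-one correspondence with proper list colorings of $G$ respecting $L$ that extend the partial coloring fixed in preprocessing: every edge constraint ``$u$ and $v$ get distinct colors'' is violated exactly when both endpoints pick some common color $c$, which is exactly what the corresponding 2-clause rules out; conversely, any satisfying assignment produces a coloring in which no edge is monochromatic. The final step is to invoke the classical linear-time algorithm for 2-\textsc{Sat} (for instance the Aspvall--Plass--Tarjan algorithm based on strongly connected components of the implication graph) to decide satisfiability of $\varphi$. The main conceptual point, and the only nontrivial one, is verifying the bijection between colorings and satisfying assignments; the rest is bookkeeping and a black-box call to 2-\textsc{Sat}, giving the claimed polynomial running time.
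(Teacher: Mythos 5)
Your reduction is correct and is exactly the argument behind the cited result: the paper does not reprove Edwards' theorem but invokes it precisely as a reduction to 2-\textsc{Sat} after list propagation, which is what you carry out. Nothing is missing; the propagation of singleton lists, the one clause per shared color per edge, and the appeal to a linear-time 2-\textsc{Sat} solver together give the claimed polynomial bound.
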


\subparagraph{Reduction rules.} Let $(G,L)$ be an instance of the \lcol{3} problem. It is straightforward to observe that the following reduction rules can be safely applied, as they do not change the set of solutions. Moreover, each of them can  be applied in polynomial time.
\begin{enumerate}[{R}1]
    \item If there exists a vertex $v$ such that $L(v)$ contains only one color $a$, then remove $a$ from $L(u)$ for each vertex $u\in N(v)$.
    \item If there exists a vertex $v$ such that $|L(v)|=0$, then report failure.
    \item If $|L(v)|\leq 2$ for each vertex $v$, then solve the problem using \cref{thm:edwards}.
\end{enumerate}
An instance $(G,L)$ for which none of the reduction rules can be applied is called \emph{reduced}.
Note that the reduction rules do not remove any vertices from the graph, even if their color is fixed. This is because such an operation might increase the diameter.
\subparagraph{Layer structure.} 
Let $(G,L)$ be a reduced instance of \lcol{3}. For $i \in [3]$, let $V_i$ be the set of vertices $v$ of $G$, such that $|L(v)|=i$.
Note that $(V_1,V_2,V_3)$ is a partition of $V$; we will call it the \emph{layer structure} of $G$.
Observe that since R1 cannot be applied to $(G,L)$, it holds that $N(V_1) \subseteq V_2$, i.e., there are no edges between $V_1$ and $V_3$.

We conclude this section with an important observation about layer structures of graphs with diameter at most 3.
\begin{proposition} \label{prop:distance}
Let $(G,L)$ be a reduced instance of \lcol{3}, where $G$ has diameter $d \leq 3$,
and let $(V_1,V_2,V_3)$ be the layer structure of $G$. Then, for any $u,v \in V_2\cup V_3$, at least one of the following hold:
\begin{enumerate}[a)]
\item $u$ and $v$ are at distance at most $d$ in $G[V_2 \cup V_3]$, or
\item $\{u,v\} \cap V_2 \neq \emptyset$.
\end{enumerate}
\end{proposition}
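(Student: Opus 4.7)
The plan is to show that the only non-trivial case is when both $u,v\in V_3$, and to handle it by arguing that a shortest $u$--$v$ path in $G$ cannot use any vertex of $V_1$. The key observation is a direct consequence of the assumption that $(G,L)$ is reduced: since rule R1 does not apply, no vertex of $V_1$ has a neighbor whose list contains the unique color in that $V_1$-vertex's list, hence $N(V_1)\subseteq V_2$; in particular, there are no edges between $V_1$ and $V_3$.

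If $u\in V_2$ or $v\in V_2$, condition (b) holds and we are done. So I would assume $u,v\in V_3$ and aim to establish condition (a). Since $\diam(G)\leq d$, there exists a shortest $u$--$v$ path $P=x_0 x_1 \cdots x_k$ in $G$ with $x_0=u$, $x_k=v$, and $k\leq d\leq 3$. The task reduces to showing that every internal vertex $x_i$ (for $1\leq i\leq k-1$) lies in $V_2\cup V_3$.

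For $d\leq 2$, the path has at most one internal vertex $x_1$, which is a common neighbor of $u,v\in V_3$; since $N(V_1)\subseteq V_2$, we have $x_1\notin V_1$, so $P$ lies entirely in $G[V_2\cup V_3]$, giving $\dist_{G[V_2\cup V_3]}(u,v)\leq 2 = d$. For $d=3$, the path has at most two internal vertices $x_1,x_2$; since $x_1$ is adjacent to $u\in V_3$ and $x_2$ is adjacent to $v\in V_3$, both $x_1$ and $x_2$ avoid $V_1$ by the same argument, so $P$ stays in $G[V_2\cup V_3]$ and witnesses $\dist_{G[V_2\cup V_3]}(u,v)\leq 3 = d$.

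There is essentially no obstacle here: the proposition is a clean structural consequence of the fact that reduction rule R1 has already been exhaustively applied, together with the trivial bookkeeping of which path vertices can possibly land in $V_1$. The only thing to be careful about is the restriction $d\leq 3$, which is exactly what ensures that the endpoints' $V_3$-membership forbids any internal vertex from being in $V_1$; for $d\geq 4$ a middle vertex of the path could in principle lie in $V_1$, which is why the statement is formulated only up to diameter $3$.
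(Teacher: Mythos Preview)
Your proof is correct and follows essentially the same approach as the paper: both arguments rest on the fact that $N(V_1)\subseteq V_2$ (so every vertex of $V_1$ is at distance at least $2$ from any vertex of $V_3$), and hence a $u$--$v$ path of length at most $3$ with $u,v\in V_3$ cannot meet $V_1$. The paper phrases this as a contradiction (a path through $V_1$ would have length at least $4$), whereas you check directly that every internal vertex is adjacent to an endpoint in $V_3$; the content is the same.
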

\begin{proof}
If $V_1 = \emptyset$, then the first outcome follows, since $G = G[V_2 \cup V_3]$.
So assume that $V_1 \neq \emptyset$. Consider $u,v \in V_{3}$ and suppose that they are not at distance at most $d$ in $G[V_2 \cup V_3]$.
Since they are at distance at most $d$ in $G$, all shortest $u$-$v$-paths in $G$ must intersect $V_1$.
However, for any $x \in V_1$, it holds that $\dist(u,x) \geq 2$ and $\dist(v,x) \geq 2$.
Thus $\dist(u,v) \geq 4$, contradicting the fact that $\diam(G) \leq 3$.
\end{proof}

Observe that \cref{prop:distance} does not generalize to diameter-4 graphs: consider e.g. 5-vertex path $P_5$ with consecutive vertices $v_1,v_2,v_3,v_4,v_5$, where $V_1 = \{v_3\}$.
Vertices $v_1$ and $v_5$ are in $V_3$, they are at distance 4 in $P_5$, but not in $P_5[V_2 \cup V_3] = P_5 - \{v_3\}$.

\cref{prop:distance} immediately yields the following corollary.
\begin{corollary}\label{cor:neighborhoods}
Let $(G,L)$ be an instance of the \lcol{3}, where $G$ has diameter $d \in \{2,3\}$,
and let $(V_1,V_2,V_3)$ be the layer structure of $G$.
For every $v \in V_{3}$, the set $N^{\leq d-1}_{G[V_2 \cup V_3]}[v]$ dominates $V_{3}$.
\end{corollary}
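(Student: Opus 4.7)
The plan is to derive this directly from \cref{prop:distance}, exploiting the fact that two vertices in $V_3$ can never realise its outcome (b). Fix $v \in V_3$ and let $u \in V_3$ be arbitrary; I want to show that $u$ is dominated by $N^{\leq d-1}_{G[V_2 \cup V_3]}[v]$. Applying \cref{prop:distance} to the pair $u,v$, outcome (b) is immediately ruled out since $\{u,v\} \subseteq V_3$ and hence $\{u,v\} \cap V_2 = \emptyset$. Therefore outcome (a) must hold, giving $\dist_{G[V_2 \cup V_3]}(u,v) \leq d$.

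From here I would split into two cases. If $\dist_{G[V_2 \cup V_3]}(u,v) \leq d-1$, then $u$ already lies in $N^{\leq d-1}_{G[V_2 \cup V_3]}[v]$, so trivially it is dominated by this set. Otherwise the distance is exactly $d$; in this case I would take a shortest $u$-$v$ path in $G[V_2 \cup V_3]$ and let $w$ be the vertex immediately preceding $u$ on this path. Then $w$ is at distance $d-1$ from $v$ in $G[V_2 \cup V_3]$, so $w \in N^{\leq d-1}_{G[V_2 \cup V_3]}[v]$, and $w$ is adjacent to $u$ in $G[V_2 \cup V_3]$ and hence in $G$, witnessing that $u$ is dominated.

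There is no real obstacle: the content of the corollary is essentially a restatement of \cref{prop:distance} under the additional hypothesis $u,v \in V_3$, together with the elementary observation that a vertex at distance $\ell$ in a subgraph is dominated by the closed ball of radius $\ell - 1$ around the centre. The only thing to be slightly careful about is that the distance bound from \cref{prop:distance} is with respect to $G[V_2 \cup V_3]$, which is exactly the subgraph in which the ball $N^{\leq d-1}_{G[V_2 \cup V_3]}[v]$ is defined, so no re-routing through $V_1$ is needed.
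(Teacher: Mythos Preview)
Your proof is correct and is precisely the argument the paper has in mind: the corollary is stated as an immediate consequence of \cref{prop:distance}, and your derivation---ruling out outcome (b) for $u,v\in V_3$ and then reading off domination from the distance bound in $G[V_2\cup V_3]$---is exactly that.
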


\section{Coloring diameter-3 graphs}\label{sec:diam3}
In this section we present a simple proof of \cref{thm:diam3}.
Actually, we will show the following more general result, which yields yet another $2^{{\Oh}(\sqrt{n \log n})}$-algorithm for diameter-2 graphs.
This will serve as a warm-up before showing our main result, i.e., \cref{thm:main}.

\begin{theorem}
The \lcol{3} problem on $n$-vertex graphs $G$ can be solved in time:
\begin{enumerate}
\item $2^{\Oh( n^{1/2} \log^{1/2} n)}$, if $\diam(G) = 2$,
\item $2^{\Oh( n^{2/3} \log^{2/3} n )}$, if $\diam(G) = 3$.
\end{enumerate}
\end{theorem}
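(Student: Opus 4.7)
The plan is to present a single recursive algorithm that handles both diameter~$2$ and diameter~$3$ uniformly, parametrised by a threshold $T$ to be fixed at the end. Given $(G,L)$, I would first exhaustively apply the reduction rules R1--R3; if R3 triggers, \cref{thm:edwards} solves the instance in polynomial time, so we may assume $(G,L)$ is reduced with layer structure $(V_1,V_2,V_3)$ and $V_3 \neq \emptyset$. Set $H := G[V_2 \cup V_3]$. If some vertex $v \in V_2 \cup V_3$ satisfies $\deg_H(v) \geq T$, the algorithm branches on the colour of $v$: for every $c \in L(v)$ it sets $L(v) := \{c\}$ and recurses. Otherwise every vertex of $V_2 \cup V_3$ has $H$-degree less than $T$, so for any $v \in V_3$ the ball $S := N^{\leq d-1}_H[v]$ has size at most $1 + T + T^2 + \cdots + T^{d-1} = O(T^{d-1})$, and by \cref{cor:neighborhoods} it dominates $V_3$ in $H$. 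In this base case the algorithm enumerates all list-consistent colourings of $S$ (at most $3^{|S|}$ of them); in each, the propagation of R1 reduces every vertex of $V_3 \setminus S$ to list size at most $2$, allowing \cref{thm:edwards} to finish in polynomial time.

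For the analysis I would use the potential $\phi := \sum_{u} |L(u)| \leq 3n$. A single branch on $v$ with chosen colour $c$ decreases $\phi$ by $(|L(v)|-1) + |\{u \in N_H(v) : c \in L(u)\}|$; summing over $c \in L(v)$ and swapping the order of summation yields total drop $|L(v)|(|L(v)|-1) + \sum_{u \in N_H(v)} |L(v) \cap L(u)|$. The key combinatorial observation is that both $L(v)$ and each $L(u)$ with $u \in V_2 \cup V_3$ are subsets of $\{1,2,3\}$ of size at least $2$, forcing $|L(v) \cap L(u)| \geq 1$; hence the summed drop is at least $2 + \deg_H(v) \geq T + 2$. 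The standard branching-vector argument (worst case when the drops are evenly split among the at most three branches) then bounds the total branching work by $2^{O(\phi/T)} = 2^{O(n/T)}$, while the base step costs $3^{|S|} \cdot n^{O(1)} = 2^{O(T^{d-1})}$.

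Choosing $T := (n \log n)^{1/d}$ balances both contributions and yields overall running time $2^{O((n \log n)^{(d-1)/d})}$, which specialises to $2^{O(n^{1/2} \log^{1/2} n)}$ for $d = 2$ and $2^{O(n^{2/3} \log^{2/3} n)}$ for $d = 3$, matching the theorem. The main obstacle I expect is the bookkeeping inside the branching recurrence: one must verify that the $\phi$-drop lower bound is preserved across recursive calls even as $|L(v)| \in \{2,3\}$ varies between invocations, so that the branching-vector analysis delivers a uniform $2^{O(n/T)}$ bound regardless of which layer the branched vertex comes from. The combinatorial core of the base case, that $S$ dominates $V_3$, is already supplied by \cref{cor:neighborhoods}, so no further structural argument is needed there.
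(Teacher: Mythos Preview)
Your approach is essentially the paper's: branch on a vertex with many neighbours in $V_2\cup V_3$, and once all such degrees are below the threshold, colour the ball $N^{\le d-1}_{H}[v]$ and finish with \cref{thm:edwards}. The paper uses the measure $\mu=2|V_2|+3|V_3|$ and a \emph{binary} branch (fix one colour $a\in L(v)\cap L'$ shared with a large same-list block of neighbours, and branch on ``$v$ gets $a$'' vs.\ ``$v$ does not get $a$''), which directly yields $F(\mu)\le F(\mu-1)+F(\mu-\Theta(T))$ and hence $F(\mu)=\mu^{\Oh(\mu/T)}$.

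Your $|L(v)|$-way branch with potential $\phi=\sum_u|L(u)|$ also works, but your branching-vector claim is backwards: for a fixed number of branches and a fixed \emph{sum} of drops, the branching number is maximised when the vector is most \emph{un}balanced, not when the drops are equal (e.g.\ $(1,3)$ beats $(2,2)$). In the worst case one branch only drops $\phi$ by $|L(v)|-1$ while another absorbs almost all of the $\ge T$ neighbour-drops, so the correct recurrence is of the form $F(\phi)\le c\,F(\phi-\Oh(1))+F(\phi-\Theta(T))$, giving $F(\phi)=\phi^{\Oh(\phi/T)}=2^{\Oh((n/T)\log n)}$ rather than $2^{\Oh(n/T)}$. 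This is exactly the extra $\log$ that the paper also carries, and with $T=(n\log n)^{1/d}$ it balances against the base-case cost $2^{\Oh(T^{d-1})}$ to give the stated $2^{\Oh((n\log n)^{(d-1)/d})}$, so the final bound is unaffected.
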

\begin{proof}
Let $(G,L)$ be an instance of \lcol{3}, where $G$ has $n$ vertices and diameter $d \in \{2,3\}$.
Without loss of generality we may assume that it is reduced.
Let $(V_1,V_2,V_3)$ be the layer structure of $(G,L)$ and let us define a measure $\mu := 2|V_2| + 3|V_3|$.

First, consider the case that there is a vertex $v \in V_2 \cup V_3$ with at least $(\mu \log \mu)^{1/d}$ neighbors in $V_2 \cup V_2$.
Since each vertex of $V_2 \cup V_3$ has one of four possible lists, there is a subset of at least $\frac{(\mu \log \mu)^{1/d}}{4}$ neighbors of $v$ that all have the same list $L'$.
Note that there is $a \in L(v) \cap L'$ since both are subsets of size at least $2$ of a set of size $3$.
We branch on coloring the vertex $v$ with color $a$ or not. In other words, in the first branch we remove from $L(v)$ all elements but $a$, and in the other one we remove $a$ from $L(v)$. Note that after reducing the obtained instance, at least  $\frac{(\mu \log \mu)^{1/d}}{4}$ vertices will lose at least one element from their list in one of the two branches.

We can bound the number of instances produced by applying this step exhaustively as follows:
\[
F(\mu) \leq F \left( \mu - \frac{(\mu \log \mu)^{1/d}}{4} \right) + F(\mu-1).
\]
Solving this inequality, we obtain that $F(\mu) = \mu^{\Oh \left (\frac{\mu}{(\mu \log \mu)^{1/d}} \right)}  = 2^{\Oh \left( (\mu \log \mu)^{1-1/d} \right)}$.

We can hence arrive at the case that $\Delta(G[V_2 \cup V_3])<(\mu \log \mu)^{1/d}$.
Recall that since the reduction rule R3 cannot be applied, it holds that $V_3 \neq \emptyset$. Pick any vertex $v \in V_3$.
Define $X := N^{\leq d-1}_{G[V_2 \cup V_3]}[v]$; by \cref{cor:neighborhoods}, the set $X$ dominates $V_{3}$.
Furthermore
\[|X| \leq 1 + \Delta(G[V_2 \cup V_3])^{d-1} = \Oh( (\mu \log \mu)^{(d-1)/d} ).\]
We exhaustively guess the coloring of $X$, which results in at most $3^{|X|} = 2^{\Oh \left( (\mu \log \mu)^{1-1/d} \right)}$ branches.
As $X$ dominates $V_3$, after applying the reduction rule R1 to every vertex of $X$, in each branch there are no vertices with three-element lists. Therefore, the instance obtained in each of the branches is solved in polynomial time using reduction rule R3.
The claimed bound follows since $\mu \leq 3n$.
\end{proof}

\section{Coloring diameter-2 graphs}\label{sec:diam2}
In this section we prove the main result of the paper, i.e., \cref{thm:main}.
Let us recall the following variant of the Chernoff concentration bound.

\begin{theorem}[{\cite[Theorem~2.3]{mcdiarmid1998concentration}}]
\label{thm_chernoff}
Let $X_1,\ldots, X_n$ be independent random variables with $0\leq X_i \leq 1$ for each $i$. Let $X=\sum X_i$ and $\overline{X} = \Exp[X]$.
\begin{enumerate}[(1)]
\item For any $\epsilon>0$,
\[
\Prob\left(X \geq (1+\epsilon)\overline{X} \right) \leq e^{-\frac{\epsilon^2\overline{X}}{2(1+\epsilon/3)}}.
\]
\item For any $\epsilon>0$,
\[
\Prob\left(X \leq (1-\epsilon)\overline{X} \right) \leq e^{-\frac{\epsilon^2\overline{X}}{2}}.
\]
\end{enumerate}
\end{theorem}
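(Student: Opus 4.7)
The plan is to use the classical Chernoff method based on moment generating functions. In both parts I fix a parameter $t > 0$, apply Markov's inequality to a suitable exponential transform of $X$, then exploit independence to factor the MGF as $\Exp[e^{tX}] = \prod_i \Exp[e^{tX_i}]$, bound each factor using the constraint $X_i \in [0,1]$, and finally pick $t$ so as to minimize the resulting expression.

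For part (1), Markov's inequality gives $\Prob(X \geq (1+\epsilon)\overline{X}) \leq \Exp[e^{tX}] \cdot e^{-t(1+\epsilon)\overline{X}}$. Since $x \mapsto e^{tx}$ is convex on $[0,1]$, I can bound $e^{tX_i} \leq (1-X_i) + X_i e^t = 1 + X_i(e^t - 1)$; taking expectations and using $1 + y \leq e^y$ yields $\Exp[e^{tX_i}] \leq \exp(p_i(e^t-1))$, where $p_i := \Exp[X_i]$. Multiplying over $i$ gives $\Exp[e^{tX}] \leq \exp(\overline{X}(e^t-1))$, and optimizing at $t = \log(1+\epsilon)$ yields the standard form
\[
\Prob\bigl(X \geq (1+\epsilon)\overline{X}\bigr) \leq \exp\bigl(-\overline{X}\bigl[(1+\epsilon)\log(1+\epsilon) - \epsilon\bigr]\bigr).
\]
To recover the bound as stated it then remains to establish the analytic inequality $(1+\epsilon)\log(1+\epsilon) - \epsilon \geq \frac{\epsilon^2}{2(1+\epsilon/3)}$ for all $\epsilon > 0$.

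Part (2) is entirely analogous, using $e^{-tX}$ in place of $e^{tX}$. The same convexity estimate yields $\Exp[e^{-tX_i}] \leq \exp(p_i(e^{-t}-1))$, and after optimizing at $t = -\log(1-\epsilon)$ one obtains
\[
\Prob\bigl(X \leq (1-\epsilon)\overline{X}\bigr) \leq \exp\bigl(-\overline{X}\bigl[(1-\epsilon)\log(1-\epsilon) + \epsilon\bigr]\bigr).
\]
The concluding inequality $(1-\epsilon)\log(1-\epsilon) + \epsilon \geq \frac{\epsilon^2}{2}$ is transparent: expanding the logarithm gives $(1-\epsilon)\log(1-\epsilon) + \epsilon = \sum_{k \geq 2} \frac{\epsilon^k}{k(k-1)}$, whose $k=2$ term equals exactly $\frac{\epsilon^2}{2}$ while all higher-order terms are nonnegative.

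The most delicate step is the refined upper-tail inequality in part (1), namely $(1+\epsilon)\log(1+\epsilon) - \epsilon \geq \frac{\epsilon^2}{2(1+\epsilon/3)}$. Unlike its lower-tail counterpart, it does not admit a clean term-by-term power-series certificate; this is precisely why the upper-tail denominator carries the extra $\epsilon/3$ correction, which must stay finite as $\epsilon \to \infty$. A workable route is to define $f(\epsilon) := (1+\epsilon)\log(1+\epsilon) - \epsilon - \frac{\epsilon^2}{2(1+\epsilon/3)}$ and verify that $f(0) = f'(0) = 0$ together with $f''(\epsilon) \geq 0$ on $\epsilon > 0$; after clearing denominators the last condition reduces to a polynomial inequality that can be checked directly. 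Once this analytic fact is in hand, combining it with the MGF bound above completes the proof.
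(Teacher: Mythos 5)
Your proof is correct and is the standard exponential-moment (Chernoff/Bernstein) argument; the paper itself does not prove this statement but simply quotes it from McDiarmid's survey, where essentially the same derivation appears, so there is nothing to reconcile. The one step you deferred does go through: with $f(\epsilon) := (1+\epsilon)\log(1+\epsilon) - \epsilon - \frac{3\epsilon^2}{2(3+\epsilon)}$ one gets $f(0)=f'(0)=0$ and $f''(\epsilon) = \frac{1}{1+\epsilon} - \frac{27}{(3+\epsilon)^3}$, and the condition $f''\geq 0$ reduces after clearing denominators to $(3+\epsilon)^3 \geq 27(1+\epsilon)$, i.e.\ $\epsilon^2(9+\epsilon)\geq 0$, which is immediate; the only other loose end is the range $\epsilon\geq 1$ in part (2), where the event $X \leq (1-\epsilon)\overline{X}$ is trivial since $X\geq 0$.
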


It will be more convenient to work with random variables for which we only know bounds on the expected value.
For this reason we will use the following corollary of \cref{thm_chernoff}.

\begin{corollary}
\label{cor_chernoff}
Let $X_1,\ldots, X_n$ be independent random variables with $0\leq X_i \leq 1$ for each $i$.
\begin{enumerate}[(1)]
\item For any $\epsilon>0$ and $\overline{X} \geq \Exp[X]$,
\[
\Prob\left(X \geq (1+\epsilon)\overline{X} \right) \leq e^{-\frac{\epsilon^2\overline{X}}{2(1+\epsilon/3)}}.
\]
\item For any $\epsilon>0$ and $\overline{X} \leq \Exp[X]$,
\[
\Prob\left(X \leq (1-\epsilon)\overline{X} \right) \leq e^{-\frac{\epsilon^2\overline{X}}{2}}.
\]
\end{enumerate}
\end{corollary}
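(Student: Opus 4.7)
The overall strategy is to reduce each part of \cref{cor_chernoff} to a direct invocation of \cref{thm_chernoff} by constructing, from $X$, an auxiliary sum $X'$ of independent $[0,1]$-valued random variables whose expectation equals \emph{exactly} $\overline{X}$.

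For part (1), where $\overline{X}\geq \Exp[X]$, I would \emph{augment} $X$ with fresh independent randomness of total expectation $\overline{X}-\Exp[X]\geq 0$. Concretely, pick $\lceil \overline{X}-\Exp[X]\rceil$ independent Bernoulli variables $B_1,\ldots,B_m$ (independent of the $X_i$'s) with parameters in $[0,1]$ summing to $\overline{X}-\Exp[X]$, and set $X' := X + \sum_j B_j$. Then $X'$ is a sum of independent $[0,1]$-valued random variables, $\Exp[X'] = \overline{X}$, and $X\leq X'$ pointwise, so
\[
\Prob(X\geq (1+\epsilon)\overline{X})\leq \Prob(X'\geq (1+\epsilon)\overline{X})\leq e^{-\epsilon^2\overline{X}/(2(1+\epsilon/3))}
\]
by applying \cref{thm_chernoff}(1) to $X'$.

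For part (2), where $\overline{X}\leq \Exp[X]$, the dual idea is to \emph{thin} $X$. Assuming $\Exp[X]>0$ (otherwise $\overline{X}=0$ and the bound is trivially at most $1$), set $\alpha := \overline{X}/\Exp[X]\in [0,1]$ and take independent Bernoulli$(\alpha)$ variables $B_1,\ldots,B_n$, independent of the $X_i$'s. Define $X_i' := B_i X_i$ and $X' := \sum_i X_i'$. Each $X_i'\in [0,1]$, the $X_i'$ are independent, $X'\leq X$ pointwise, and $\Exp[X']=\alpha\Exp[X]=\overline{X}$, so
\[
\Prob(X\leq (1-\epsilon)\overline{X})\leq \Prob(X'\leq (1-\epsilon)\overline{X})\leq e^{-\epsilon^2\overline{X}/2}
\]
by applying \cref{thm_chernoff}(2) to $X'$.

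There is no real obstacle in this argument: the only points needing care are verifying that each auxiliary $X'$ is indeed a sum of \emph{independent} $[0,1]$-valued variables (which is transparent because the padding/thinning randomness is sampled independently of everything else) and handling the trivial degenerate case $\Exp[X]=0$ in part (2). Beyond that the proof is a direct quotation of \cref{thm_chernoff}.
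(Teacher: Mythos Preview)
Your proposal is correct and follows essentially the same strategy as the paper: build an auxiliary sum $X'$ of independent $[0,1]$-valued variables with $\Exp[X']=\overline{X}$ that dominates (for (1)) or is dominated by (for (2)) $X$, then invoke \cref{thm_chernoff}. The only cosmetic difference is that the paper pads with \emph{deterministic} constants $Y_i=(\overline{X}-\Exp[X])/k$ for (1) and simply rescales $Y=X\cdot\overline{X}/\Exp[X]$ for (2), avoiding the extra Bernoulli randomness you introduce; both routes are equally valid.
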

\begin{proof}
In order to prove (1) let us consider a random variable $Y=X + Y_1 + Y_2 + \ldots + Y_k$, where $k=\left\lceil \overline{X} - \Exp[X]\right\rceil$ and each $Y_i$ is a constant equal to $\frac{\overline{X} - \Exp[X]}{k}$. Clearly $\Exp[Y]=\overline{X}$ and $Y\geq X$, so the statement follows by \cref{thm_chernoff} (1).

For~(2) it is enough to apply \cref{thm_chernoff}~(2) for the random variable $Y=X\frac{\overline{X}}{\Exp[X]}$.
\end{proof}

%\todo[inline]{komentarz, że (i) to równe stopnie, (ii) to prawie jak średnica 2 a (iii) jest o tym, że żadne 2 wierzchołki nie mogą być zbyt podobne w sensie tego, jaką drogą łączą się z wierzchołkami z drugiego sąsiedztwa. Teza, że potrafimy małymi zbiorami i oczywistą redukcją zabrać po jednym kolorze każdemu wierzchołkowi w jakimś jednym kolorze.}

%The following technical lemma encapsulates the combinatorial observation that is the crucial ingredient of our algorithm.

We start with a technical lemma that is the crucial ingredient of our algorithm.

\begin{lemma} \label{lemma_technical}
There exists an absolute constant $K$ such that the following is true. Let ${G}$ be a $3$-colorable graph with $n$ vertices such that
\begin{enumerate}[(i)]
\item $\Delta(G)\leq n^{2/3}$,
\item for every $v \in V(G)$, the set $N^{\leq 2}_G(v)$ contains at least $n-\frac{1}{36}n^{2/3}$ vertices,%\prz{changed $1/6$ to $1/12$}
\item for every two vertices $u,v\in V(G)$ there are at most $n^{2/3}$ vertices $w$ such that $N_G(u) \cap N_G(v) \cap N_G(w) \neq \emptyset$.
\end{enumerate}
Let $\phi$ be a proper $3$-coloring of ${G}$, where $a\in [3]$ is the color that appears most frequently. Define $A := \phi^{-1}(a)$.
Then there exist sets $S \subseteq A$ and $\widetilde{S} \subseteq V(G) \setminus A$, each of size at most $K \cdot n^{1/3} \log n$,
such that $S \cup \widetilde{S} \cup \left( N(S) \cap N(\widetilde{S}) \right)$ dominates at least $\frac{n}{6}$ vertices.
\end{lemma}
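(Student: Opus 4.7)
The proof will proceed by a probabilistic argument. Set $p := K' \log n / n^{2/3}$ for a sufficiently large absolute constant $K'$, and construct $S \subseteq A$ and $\widetilde{S} \subseteq V(G)\setminus A$ by including each vertex independently with probability $p$. Since $\Exp[|S|] = p|A| \leq pn$ (and analogously for $\widetilde{S}$), \cref{cor_chernoff} implies that with probability at least $9/10$ we have $|S|,|\widetilde{S}| \leq K n^{1/3}\log n$, for $K := 2K'$.

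The heart of the argument is to show that $\Exp[|N[T]|] \geq n/3$, where $T := S \cup \widetilde{S} \cup (N(S) \cap N(\widetilde{S}))$; since $|N[T]| \leq n$, this gives $|N[T]| \geq n/6$ with probability at least $1/5$, which together with the size bound yields the required $S, \widetilde{S}$. For each $v \in V(G)$ I would focus on the extra coverage coming from $W := N(S) \cap N(\widetilde{S})$. Writing $d_A(u)$ and $d_{\bar A}(u)$ for the number of neighbors of $u$ in $A$ and in $V(G)\setminus A$ respectively, the independence of $S$ and $\widetilde{S}$ yields
\[
\Prob(u \in W) = \bigl(1-(1-p)^{d_A(u)}\bigr) \cdot \bigl(1-(1-p)^{d_{\bar A}(u)}\bigr).
\]
Letting $Y_v := |N[v] \cap W|$, the event $\{v \in N[T]\}$ contains $\{Y_v \geq 1\}$, and $\Exp[Y_v] = \sum_{u \in N[v]} \Prob(u \in W)$.

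Hypotheses (i) and (ii) together force $\deg(v) \geq n^{1/3}(1-o(1))$ for every $v$: from $|N^{\leq 2}(v)| \leq 1 + \deg(v)\cdot \Delta$ one obtains $\deg(v) \geq (n - n^{2/3}/36 - 1)/n^{2/3}$. Using this together with (i), one argues that the ``vee count'' $f(v) := \sum_{u \in N[v]} d_A(u)\,d_{\bar A}(u)$ is large enough that $\Exp[Y_v]$ is bounded below by a positive constant; the sampling rate $p$ is calibrated precisely so that $p^2 f(v) = \Omega(1)$. To convert this first-moment estimate into the pointwise lower bound $\Prob(Y_v \geq 1) = \Omega(1)$, I would apply the Paley--Zygmund inequality, which requires a bound on $\mathrm{Var}(Y_v)$. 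The variance receives nonzero contributions only from pairs $(u, u') \in N[v]^2$ sharing a common neighbor in $A$ (via the $N(S)$ part) or in $V(G)\setminus A$ (via the $N(\widetilde{S})$ part). Condition (iii), which caps $|N(N(u)\cap N(u'))|$ by $n^{2/3}$, is exactly the tool needed to count such ``coincident'' pairs and obtain $\mathrm{Var}(Y_v) = \Oh(\Exp[Y_v]^2)$.

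The main obstacle will be carrying out the second-moment bookkeeping: condition (iii) must be threaded carefully through the covariance computation, and the asymmetric roles of $S$ (sampling inside $A$) and $\widetilde{S}$ (sampling outside $A$), together with their mutual independence, must be fully exploited. Summing the pointwise bounds $\Prob(v \in N[T]) = \Omega(1)$ over $v \in V(G)$ then yields $\Exp[|N[T]|] \geq n/3$, giving $|N[T]| \geq n/6$ with positive probability and hence the existence of the required sets $S$ and $\widetilde{S}$.
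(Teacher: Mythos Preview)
Your approach diverges from the paper's in a way that creates a real gap. The paper does \emph{not} sample $\widetilde{S}$ uniformly from $V(G)\setminus A$. Instead it fixes a single vertex $v_a\in A$, samples $\widetilde{S}\subseteq N(v_a)$ at the much higher rate $\widetilde{p}=\Theta(n^{-1/3}\log n)$, and introduces a routing function $f\colon N^{\leq 2}(v_a)\to N(v_a)$ sending each $x$ to some neighbour in $N(v_a)$. A vertex $u\in A$ is then said to \emph{threaten} $w\in A$ if the fixed common neighbour $x_{uw}$ satisfies $f(x_{uw})\in\widetilde{S}$. The point is that the threat count $X_w$ decomposes as $\sum_{i}X_i$ over the neighbours $v_i$ of $v_a$, with the $X_i$ independent (each depends only on whether $v_i\in\widetilde{S}$) and each bounded by $n^{2/3}$ via hypothesis~(iii). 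Chernoff then gives concentration; afterwards $S$ is sampled from $A$ at rate $n^{-2/3}$.

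The step in your outline that does not go through is the first-moment claim that $p^{2}f(v)=\Omega(1)$, i.e.\ $f(v)=\Omega(n^{4/3}/\log^{2}n)$. Hypotheses (i) and (ii) do not force this. For $v\in A$ one has $d_A(u)\geq 1$ for each $u\in N(v)$, so the best generic lower bound is $f(v)\geq\sum_{u\in N(v)} d_{\bar A}(u)\geq |\bar A|-O(n^{2/3})$, which is only $O(n)$ (or even $o(n)$) and gives $p^{2}f(v)=O(\log^{2}n/n^{1/3})\to 0$. Worse, nothing in (i)--(iii) prevents the neighbours $u$ of $v$ from having $d_{\bar A}(u)=O(1)$; then $\Prob(u\in N(\widetilde{S}))=O(p)$ and $\Exp[Y_v]=O(p\,|N(v)|)$, which is $o(1)$ whenever $|N(v)|=o(n^{2/3}/\log n)$. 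In that regime Markov already gives $\Prob(Y_v\geq 1)\to 0$, so Paley--Zygmund cannot rescue you. Direct domination via $S\cup\widetilde{S}$ fails for the same reason: with $|N(v)|\approx n^{1/3}$ one gets $\Prob\bigl(N(v)\cap(S\cup\widetilde{S})\neq\emptyset\bigr)=O(n^{-1/3}\log n)$.

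The paper's routing trick is exactly what circumvents this obstacle: for every $x\in N^{\leq 2}(v_a)$ one has $\Prob(x\in N(\widetilde{S}))\geq\Prob(f(x)\in\widetilde{S})=\widetilde{p}$, \emph{independently of how many $\bar A$-neighbours $x$ has}. This decoupling is the heart of the argument, and your uniform sampling of $\widetilde{S}$ from all of $\bar A$ at rate $\Theta(n^{-2/3}\log n)$ loses it.
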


Before we prove \cref{lemma_technical}, let us explain its purpose. Suppose that $G$ is a graph with diameter at most $2$ and we are trying to find a $3$-coloring of $G$ under the promise that it exists. We start by assigning to each vertex a list of $3$ possible colors. Note that if we correctly guess a set $S$ of vertices of the most frequent color $a$ and a set $\widetilde{S}$ of vertices together with its coloring using colors $[3]\setminus \lbrace a\rbrace$, then we can deduce the color of each vertex in $N(S) \cap N(\widetilde{S})$.
Hence, our reduction rules will remove at least one color from the list of each vertex dominated by $S \cup \widetilde{S} \cup \left( N(S) \cap N(\widetilde{S}) \right)$. If the sets $S$ and $\widetilde{S}$ are as in the lemma, then we have just removed at least $\frac{n}{6}$ colors from all the lists by guessing the coloring of only $\Oh(n^{1/3}\log n)$ vertices. This is roughly why our algorithm is much faster than an exhaustive search.

The assumptions of the lemma can be read as follows: (i) vertices in $G$ do not have too many neighbors, (ii) $G$ is almost a graph with diameter $2$ and (iii) common neighbors of every two vertices $u$ and $v$ do not dominate too many vertices of the graph. As we will see later, those assumptions arise naturally when trying to solve the problem using simple branching rules -- if any of them is violated, then searching for a $3$-coloring of $G$ becomes easier because of other reasons.

\begin{proof}[Proof of \cref{lemma_technical}]
Note that we can assume that $n \geq n_0$, where $n_0$ is a constant that implicitly follows from the reasoning below.
Indeed, otherwise it is sufficient to set $K := n_0$, $S: = A$, and $\widetilde{S} := V(G) \setminus A$. Thus from now on we assume that $n$ is sufficiently large.

For every two vertices $u,v\in V(G)$ such that $N[u]\cap N[v] \neq \emptyset$, let $x_{uv}$ be a vertex from $N[u]\cap N[v]$.
%\marta{co jeśli $N[u]\cap N[v]\subseteq \{u,v\}$? może lepiej napisać, że $x_{uv}$ jest ustalonym wierzchołkiem z $N[u]\cap N[v]$?}
Fix some vertex $v_a  \in A$ and a function $f:N^{\leq 2}(v_a)\to N(v_a)$ defined such that $f(u)$ is an arbitrarily chosen vertex from $N[u] \cap N(v_a)$.

\begin{figure}
\centering
\includegraphics[scale=1]{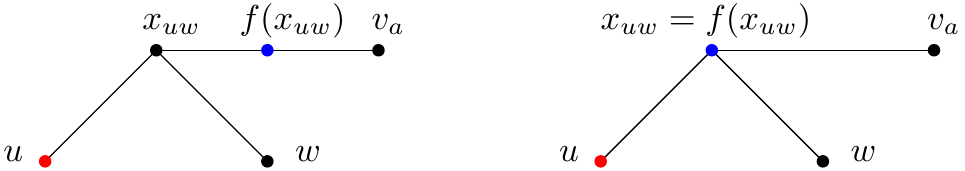}
\caption{The vertex $u$ threatens $w$: if $x_{uw} \in \widetilde{S}$ and $u \in S$, then $w$ has a neighbor with uniquely determined color.}\label{fig:threaten}
\end{figure}

We start by selecting $\widetilde{S}$ as a subset of neighbors of $v_a$. For such a set $\widetilde{S}$ we say that a vertex $u\in A$ \emph{threatens} a vertex $w\in A$ if 
\begin{enumerate}[(1)]
\item $N[u]\cap N[w]\neq \emptyset$, 
\item $x_{uw}\in N^{\leq 2}(v_a)$, and
\item $f(x_{uw})\in \widetilde{S}$.
\end{enumerate}
Intuitively, $u$ threatens $w$ if selecting $u$ to $S$ would undoubtedly cause $w$ to be dominated by $S \cup \widetilde{S} \cup \left(N(S) \cap N(\widetilde{S})\right)$, see~\cref{fig:threaten}.
The following claim gives us a set $\widetilde{S}$ such that each vertex of $A$ is threatened by many vertices.

\begin{claim}
\label{claim_manyThreats}
There exists a set $\widetilde{S}\subseteq N(v_a)$ of order at most $200 n^{1/3} \log n$ such that for at least half of vertices $w\in A$ there are at least $8 n^{2/3} \log n$ vertices from $A$ that threaten $w$. 
\end{claim}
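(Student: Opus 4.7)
My plan is to construct $\widetilde{S}$ by independent sampling and then to verify via a probabilistic argument that, with positive probability, the sampled set has all the desired properties simultaneously. I include each $y \in N(v_a)$ in $\widetilde{S}$ independently with probability $p := 100 \log n / n^{1/3}$. By hypothesis~(i), $|N(v_a)| \leq n^{2/3}$, so $\Exp[|\widetilde{S}|] \leq 100 n^{1/3} \log n$, and \cref{cor_chernoff}(1) with $\varepsilon = 1$ gives $|\widetilde{S}| \leq 200 n^{1/3} \log n$ with probability $1 - o(1)$. For $w \in A$, write $P(w) := \{u \in A : x_{uw} \in N^{\leq 2}(v_a)\}$ and $c_w(y) := |\{u \in P(w) : f(x_{uw}) = y\}|$; the number $T(w)$ of vertices of $A$ threatening $w$ then decomposes as $T(w) = \sum_{y \in N(v_a)} c_w(y) \cdot \mathbf{1}[y \in \widetilde{S}]$, so $\Exp[T(w)] = p|P(w)|$.

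The first technical step is to show that $|P(w)|$ is linear in $n$ for at least half of the vertices $w \in A$. I would exploit the freedom in the choice of $x_{uw}$ (made at the outset of the proof of \cref{lemma_technical}) and pick each $x_{uw}$ inside $N(u) \cap N(w) \cap N^{\leq 2}(v_a)$ whenever such a common neighbor exists. Let $B := V(G) \setminus N^{\leq 2}(v_a)$; by (ii), $|B| \leq n^{2/3}/36$. A pair $(u, w) \in A \times A$ with $u \neq w$ can fail $u \in P(w)$ only if either $N(u) \cap N(w) = \emptyset$ (at most $|A| \cdot n^{2/3}/36 = o(n^2)$ such pairs by (ii)) or $\emptyset \neq N(u) \cap N(w) \subseteq B$, accounting for at most $\sum_{x \in B} |N(x)|^2 \leq |B| \Delta(G)^2 \leq n^2/36$ pairs by (i). Since $|A| \geq n/3$, this yields $\sum_{w \in A} |P(w)| \geq |A|^2 - n^2/36 - o(n^2)$, and an averaging argument using the trivial $|P(w)| \leq |A|$ then shows that a $(1/2 + \Omega(1))$-fraction of $w \in A$ satisfies $|P(w)| \geq c n$ for some small absolute constant $c$ of order $1/6$. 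Call these \emph{good}.

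Next I apply Chernoff to each good $w$ to show $T(w) \geq 8 n^{2/3} \log n$ with high probability; this needs a bound on the weights $c_w(y)$, and it is here that hypothesis (iii) plays its essential role. For fixed $y \in N(v_a)$, every $u$ with $f(x_{uw}) = y$ satisfies $x_{uw} \in N[y]$. If $x_{uw} = y$ then $u \in N(y)$, so there are at most $|N(y)| \leq n^{2/3}$ such $u$ by (i). Otherwise $x_{uw} \in N(y) \cap N(u) \cap N(w) \neq \emptyset$, and applying (iii) to the pair $(y, w)$ bounds the number of such $u$ by $n^{2/3}$. Hence $c_w(y) \leq 2 n^{2/3}$, and $T(w)/(2n^{2/3})$ is a sum of independent $[0,1]$-valued random variables. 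Calibrating so that $p \cdot c n \geq 16 n^{2/3} \log n$, every good $w$ has $\Exp[T(w)] \geq 16 n^{2/3} \log n$, and \cref{cor_chernoff}(2) with $\varepsilon = 1/2$ gives
\[
\Prob\bigl[T(w) < 8 n^{2/3} \log n\bigr] \leq \exp\!\left(-\tfrac{p|P(w)|}{16 n^{2/3}}\right) \leq \tfrac{1}{n}.
\]

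To finish, the expected number of good $w$'s failing this threshold is at most $|A|/n \leq 1$, so by Markov's inequality, with probability at least $1/2$ at most one such failure occurs, leaving at least $(1/2 + \Omega(1))|A| - 1 \geq |A|/2$ good $w$'s (for large $n$) with $T(w) \geq 8 n^{2/3} \log n$. A union bound with the size bound on $|\widetilde{S}|$ then yields the existence of the desired $\widetilde{S}$. The hardest part will be the tight counting in the second paragraph: the constants are essentially forced by the worst case $|A| = n/3$, and all three hypotheses of \cref{lemma_technical} must be used in concert — (i) for the sizes of $N(v_a)$ and $N(y)$, (ii) for both types of bad pairs in the counting, and (iii) for capping $c_w(y)$, without which Chernoff would be useless.
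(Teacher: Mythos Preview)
Your proposal is correct and follows the paper's strategy essentially verbatim: sample $\widetilde{S}\subseteq N(v_a)$ with probability $\Theta(n^{-1/3}\log n)$, bound $|\widetilde{S}|$ by Chernoff, argue that a constant fraction of $w\in A$ have $\Omega(n)$ potential threats, and then apply Chernoff to each such $w$ using hypothesis~(iii) to cap the weight of each $y\in N(v_a)$. The differences are purely organizational. You make the $x_{uw}$'s land in $N^{\le 2}(v_a)$ by choice and then average $|P(w)|$ over $w\in A$; the paper keeps the $x_{uw}$ arbitrary, defines instead $A'=\{w\in A:\ |N(N(w)\setminus N^{\le 2}(v_a))\cap A|<|A|/2\}$, and shows $|A'|\ge |A|/2$ via the same double count $\sum_{x\notin N^{\le 2}(v_a)}\deg(x)^2\le n^2/36$, restricting for $w\in A'$ to those $u$ whose \emph{entire} common neighborhood with $w$ lies in $N^{\le 2}(v_a)$. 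You also derive the slightly more careful bound $c_w(y)\le 2n^{2/3}$ by splitting off the case $x_{uw}=y$, whereas the paper invokes~(iii) alone for $|U_i|\le n^{2/3}$; and the paper finishes with a straight union bound ($\le n\cdot n^{-2}$) rather than your Markov step, which is cleaner but equivalent in effect.
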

\begin{claimproof}
We select $\widetilde{S}$ randomly in such a way that each neighbor of $v_a$ is included in $\widetilde{S}$ independently with probability $\widetilde{p}=100 n^{-1/3}\log n$. We will show that $\widetilde{S}$ satisfies the desired properties with positive probability.

Note that the size of $\widetilde{S}$ is a sum of $\deg(v_a)$ independent random boolean variables and the expected value of $|\widetilde{S}|$ is $\deg(v_a) \cdot \widetilde{p}$.
Recall that by the assumption (i) we have $\deg(v_a) \leq n^{2/3}$.
Therefore by \cref{cor_chernoff}~(1) applied with $\epsilon=1$ we deduce that
\[
%\Prob\left( \left|\widetilde{S}\right| > 40 n^{\frac{1}{3}} \ln n\right) \leq e^{-24 n^{\frac{2}{3}} \ln n}.
\Prob\left( |\widetilde{S}| > 200 n^{1/3} \log n\right) \leq e^{-37.5 n^{1/3} \log n}.
\]%\marta{nie powinno być 60 zamiast 40? albo $\eps=1$?}\marta{no i po drugiej stronie też mi się coś nie zgadza}
%\marta{is the right side of the inequality correct?}

Let $A^\prime \subseteq A$ be the set of those $v\in A$, for which the set $N\left(N(v)\setminus N^{\leq 2}(v_a)\right)$ contains fewer than half of vertices from $A$.
We will show that $\left|A^\prime\right|\geq \frac{1}{2} \left|A\right|$.
First, let us estimate the number $P$ of ordered pairs of vertices $(u,v)$ such that $u$ and $v$ have a common neighbor outside of $N^{\leq 2}(v_a)$.
By (i) each vertex outside of $N^{\leq 2}(v_a)$ can be a common neighbor for at most $n^{4/3}$ pairs of vertices,
so (ii) implies that $P \leq \frac{1}{36} n^{2}$.
Note that a vertex from $A$ is not contained in $A^\prime$ only if it is in at least $\left|A\right|$ pairs that contribute to $P$.
It follows that $A^\prime$ contains at least $\left|A \right| - \frac{2P}{\left|A\right|}$ vertices. %\prz{changed $P$ to $2P$}
%\marta{I don' see how the inequalities for $|A|$ and $P$ imply $|A'|\geq |A|/2$. I think this all should work if we have $1/36$ instead of $1/12$}
Since $a$ is the most frequent color used by the $3$-coloring $\phi$, we have $\left|A\right|\geq \frac{1}{3} n$,
and thus $|A^\prime| \geq \frac{1}{2}\left|A\right|$, as desired.

Fix a vertex $w$ from $A^\prime$. Consider a random variable $X_w$ that counts the number of vertices $u$ from $A$ such that $u$ threatens $w$ and $N(u)\cap N(w)\subseteq N^{\leq 2}(v_a)$. Our plan is to use \cref{cor_chernoff} to show that $X_w$ is at least $8 n^{2/3}\log n$ with high probability. 

We start by estimating the expected value of $X_w$.
Let $U$ be the set of vertices $u$ for $A$ such that $N(u)\cap N(w)\subseteq N^{\leq 2}(v_a)$.
Note that each vertex $u\in U$ contributes $1$ to $X_w$ if and only if $f(x_{uw}) \in \widetilde{S}$, i.e., with probability $\widetilde{p}$.
Since $w\in A^\prime$, the size of $U$ is at least $\frac{1}{2}\left|A\right|$ minus the number of vertices outside of $N^{\leq 2}(w)$, which totals to at least $\frac{n}{6} - \frac{1}{36}n^{2/3}$ by (ii).
Therefore, $\Exp[X_w]\geq 16n^{2/3}\log n$ for large enough $n$.

Now we express $X_w$ as a sum of a number of independent random variables. Fix an ordering $v_1, v_2, \ldots, v_{\deg(v_a)}$ of neighbors of $v_a$ and define $U_i$ as the set of vertices $u$ from $U$ such that $x_{uw}\in N^{\leq 2}(v_a)$
and $f(x_{uw})=v_i$; note that by the definition of $U$, there is a vertex in $N[u]\cap N[w] \cap N^{\leq 2}(v_a)$, so $x_{uw}$ and $f(x_{uw})$ exist for all vertices $u\in U$. For $i=1, 2, \ldots, \deg(v_a)$ let $X_i$ be a random variable that is equal to $\left|U_i\right|$ if $v_i\in \widetilde{S}$ and $0$ otherwise. Clearly $X_w=\sum_i X_i$ and all the variables $X_1,\ldots, X_{\deg(v_a)}$ are independent by the independent selection of $\widetilde{S}$.

By (iii), applied for $w$ and $v_i$, we obtain that $X_i\leq n^{2/3}$ for all $i$. Therefore we may use \cref{cor_chernoff}~(2)
for the sequence of variables $\frac{X_i}{n^{2/3}}$ and $\epsilon=\frac{1}{2}$ to deduce that
\[
\Prob\left(\frac{X_w}{n^{2/3}} \leq 8 \log n \right) \leq e^{-2 \log n},
\]
which gives that
\[
\Prob\left(X_w \leq 8 n^{2/3} \log n \right) \leq n^{-2}.
\]

By the union bound we obtain that the probability that $\widetilde{S}$ has more than $200 n^{1/3}\log n$ vertices or that $X_w<8 n^{2/3} \log n$ for any $w\in A^\prime$ is at most $n \cdot n^{-2} + n^{-37.5 n^{1/3}}$.
Therefore, for large enough $n$ the set $\widetilde{S}$ satisfies the required properties with positive probability, so the proof of the claim is complete.
\end{claimproof}

Having selected $\widetilde{S}$, we proceed to selecting $S$ as a subset of $A$ that guarantees the desired domination property.

\begin{claim}
\label{claim_Domination}
There exists a set $S\subseteq A$ of order at most $2 n^{1/3}$ such that at least half of the vertices $w\in A$ are dominated by $S \cup \widetilde{S} \cup \left( N(S) \cap N(\widetilde{S}) \right)$.
\end{claim}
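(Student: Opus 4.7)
The plan is to select $S$ by a second round of random sampling, now inside $A$, and to argue that each vertex $w$ in the ``good'' set $A'$ produced by \cref{claim_manyThreats} is dominated by $S \cup \widetilde{S} \cup \left(N(S) \cap N(\widetilde{S})\right)$ as soon as at least one vertex threatening $w$ is placed in $S$. Concretely, I would include each vertex of $A$ in $S$ independently with probability $p := n^{-2/3}$. Then $\Exp[|S|] \leq |A|\cdot p \leq n^{1/3}$, so \cref{cor_chernoff}~(1) applied with $\epsilon = 1$ shows that the event $|S| > 2n^{1/3}$ has probability at most $e^{-\Omega(n^{1/3})}$.

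For a fixed $w \in A'$, \cref{claim_manyThreats} furnishes at least $8 n^{2/3} \log n$ vertices of $A$ that threaten $w$, so the probability that $S$ avoids all of them is at most $(1-p)^{8 n^{2/3}\log n} \leq e^{-8\log n} = n^{-8}$. A union bound over the at most $n$ vertices of $A'$ bounds the failure probability by $n^{-7}$. Hence for $n$ large enough, with positive probability we simultaneously have $|S| \leq 2 n^{1/3}$ and every $w \in A'$ threatened by at least one $u \in S$; since $|A'| \geq |A|/2$, this is exactly the property we need.

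The remaining deterministic step is to convert the existence of a threat $u \in S$ of $w$ into membership of $w$ in $N[S \cup \widetilde{S} \cup (N(S) \cap N(\widetilde{S}))]$. I would split into cases according to $x_{uw}$: if $x_{uw} \in \{u,w\}$ then $w \in N[u] \subseteq N[S]$; otherwise $x_{uw}$ is a common neighbor of $u$ and $w$, and the condition $f(x_{uw}) \in \widetilde{S}$ places $x_{uw}$ either in $\widetilde{S}$ (when $f(x_{uw}) = x_{uw}$) or in $N(\widetilde{S})$ via its neighbor $f(x_{uw})$. Combined with $x_{uw} \in N(u) \subseteq N(S)$, this gives $x_{uw} \in \widetilde{S} \cup (N(S) \cap N(\widetilde{S}))$, so $w$ is dominated through $x_{uw}$.

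The main (minor) obstacle is this last deterministic unpacking, since ``threatening'' is defined indirectly through the auxiliary map $f$ and the representatives $x_{uw}$, and one has to check all branches by hand to confirm that $u \in S$ and $f(x_{uw}) \in \widetilde{S}$ really place $w$ in the target dominating set. The probabilistic side is then a straightforward pairing of Chernoff for $|S|$ with a union bound over $A'$, with $p$ tuned so that $p \cdot 8 n^{2/3}\log n = 8\log n$ beats the $\log n$ needed to absorb the union bound.
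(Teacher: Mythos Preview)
Your proposal is correct and follows essentially the same approach as the paper: independent sampling inside $A$ with probability $p=n^{-2/3}$, a Chernoff bound for $|S|$, the estimate $(1-p)^{8n^{2/3}\log n}\le n^{-8}$ for each well-threatened vertex, and a union bound. Your deterministic unpacking of why a threat $u\in S$ forces $w$ into $N\bigl[S\cup\widetilde{S}\cup(N(S)\cap N(\widetilde{S}))\bigr]$ is in fact more careful than the paper's, which simply asserts that $w$ is dominated by $N(S)\cap N(\widetilde{S})$ without spelling out the edge cases $x_{uw}\in\{u,w\}$ or $f(x_{uw})=x_{uw}$.
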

\begin{claimproof}
We randomly select $S$ so that each vertex from $A$ is in $S$ independently with probability $p=n^{-2/3}$. Note that by \cref{cor_chernoff}~(1) the size of $S$ is at most $2 n^{1/3}$ with probability at least $1 - e^{-\frac{3}{8}n^{2/3}}$.

Let $w$ be a vertex from $A$ that is threatened by at least $8 n^{2/3} \log n$ vertices from $A$. The probability that $w$ is not dominated by $N(S)\cap N(\widetilde{S})$ is at most
\[
\left(1 - p\right)^{8 n^{2/3}\log n} \leq e^{-8p n^{2/3} \log n} \leq e^{-8 \log n} \leq n^{-8}.
\]
By the union bound it follows that with probability at least $1 - n^{-7}$ all vertices threatened by at least $8 n^{2/3} \log n$ vertices from $A$ are dominated by $N(S)\cap N(\widetilde{S})$. \cref{claim_manyThreats} implies that there are at least $\frac{1}{2}\left|A\right|$ such vertices, so the proof is complete.
\end{claimproof}
Setting $K:= \max(n_0,200)$.
Now the statement of the lemma follows from \cref{claim_Domination} by observing that since $A$ is the most frequent color, we have $\frac{1}{2}|A| \geq \frac{1}{6}n$. 
\end{proof}

Now we are ready to prove \cref{thm:main}.

\mainthm*
\begin{proof}
Let $(G,L)$ be an instance of the \lcol{3} problem.
Again, we start by applying reduction rules R1, R2, R3, so we can assume that $(G,L)$ is reduced.
Let $(V_1,V_2,V_3)$ be the layer structure of $G$ and set $\mu:= \left|V_3 \right|$.

We use one of the four branching rules to produce a number of instances of the problem,
each with fewer vertices with lists of size $3$.
Those instances are solved recursively and if a success is reported for at least one of them, then the algorithm terminates and reports a success. The following branching rules are applied in the given order -- it is essential that B4 is executed only if the rules B1, B2 and B3 cannot be applied.
\begin{enumerate}[{B}1]
\item If there exists a vertex $v\in V_2\cup V_3$ such that $v$ has more than $\mu^{2/3}$ neighbors in $V_3$, then for every color $a\in L(v)$ solve an instance obtained by replacing $L(v)$ with $\lbrace a \rbrace$ and exhaustively applying the reduction rules.
\item If there exists a vertex $v \in V_3$ such that for at least $\frac{1}{36} \mu^{2/3}$ vertices $u\in V_3$ a common neighbor of $u$ and $v$ is in $V_2$, then for every color $a\in L(v)$ solve an instance obtained by replacing $L(v)$ with $\{a\}$ and exhaustively applying the reduction rules.
\item If there are two vertices $u,v\in V_3$ such that for at least $\mu^{2/3}$ vertices $w$ from $V_3$ the set $N(u) \cap N(v) \cap N(w)$ is nonempty, then for every two distinct colors $a, b$ construct an instance by setting $L(u):=\lbrace a \rbrace$ and $L(v):=\lbrace b \rbrace$ and one additional instance obtained by replacing vertices $u$ and $v$ with a new vertex $z$ adjacent to $N(u)\cup N(v)$ with $L(z)=[3]$.
Apply the reduction rules to each of those instances and solve them recursively.
\item Let $K$ be the constant from \cref{lemma_technical}. For every tuple $(a,S,\widetilde{S},\phi)$, where
\begin{itemize}
\item $a \in [3]$ is a color,
\item $S \subseteq V_3$ is a set of size at most $K \cdot \mu^{1/3}\log \mu$,
\item $\widetilde{S} \subseteq V_3 \setminus S$ is a set of size at most $K \cdot \mu^{1/3}\log \mu$,
\item $\phi$ is a coloring of $\widetilde{S}$ using colors $[3] \setminus \{a\}$,
\end{itemize}
construct an instance by setting $L(v):=\lbrace a\rbrace$ for each $v\in S$ and $L(v)=\lbrace \phi(v)\rbrace$ for $v\in \widetilde{S}$.
Apply the reduction rules to each of those instances for which $S \cup \widetilde{S} \cup \left( N(S) \cap N(\widetilde{S}) \right)$ dominates at least $\frac{1}{6}\mu$ vertices from $V_3$ and solve them recursively.
\end{enumerate}
Let us show that the above algorithm is correct. 
Branching rules B1 and B2 are clearly correct, because if there is a solution to the given instance of the \lcol{3} problem, then it assigns to $v$ one color from $L(v)$. The rule B3 is correct because if there is a solution to the given instance of the problem, then it either assigns two different colors to $u$ and $v$, or assigns the same color to $u$ and $v$, hence at least one of the constructed instances will admit a solution.  Note that contracting the vertices $u$ and $v$ does not increase the diameter.
Now consider the branching rule B4.
Recall that it is applied only when rules B1, B2 and B3 are inapplicable, so in this case the graph $G[V_3]$ satisfies the assumptions (i)-(iii) of \cref{lemma_technical}.
Therefore if the original instance has a solution, then by \cref{lemma_technical} at least one instance constructed in B4 admits a solution.
On the other hand, each instance is obtained by fixing the colors of vertices in $S \cup \widetilde{S} \subseteq V_3$, so each such a coloring respects lists $L$. Furthermore, if this coloring is improper, then the application of reductions rules R1 and R2 will cause the algorithm to reject the instance. Hence, the branching rule B4 is correct.

Let us denote by $F(x)$ the maximum running time of the algorithm on an instance with at most $x$ vertices with lists of size $3$.
By $\p(n)$ we denote the cost of exhaustively applying the reduction rules to an instance with $n$ vertices; note that $\p(n)$ is polynomial in $n$.

Now we will bound the running time of the algorithm on our instance $(G,L)$ with $\mu$ vertices with lists of size $3$, depending on which branching rule was applied.
\subparagraph*{Case 1: B1 was applied.} Note that this branching produced at most three instances of the problem, each with at most $\mu-\mu^{2/3}$ vertices with lists of size $3$. This is because for every vertex $u\in V_3$ that is a neighbor of $v$ the color $c$ was removed from $L(u)$. Therefore, in this case the running time is at most
\[
3F\left(\mu-\mu^{2/3}\right) + 3\p(n).
\] 
\subparagraph*{Case 2: B2 was applied.} Let $c$ be the color which maximizes the number $N$ of vertices $u\in V_3$ such that the list of a common neighbor of $v$ and $u$ in $V_2$ does not contain $c$; clearly $N\geq \frac{1}{108} \mu^{2/3}$. Let $a$ and $b$ be the two other colors. Note that if a vertex $u$ contributes to $N$, then after the application of reduction rules $b$ (respectively $a$) is removed from $L(u)$ in the instance constructed for the color $a$ (respectively $b$). It follows that the running time of the algorithm in this case is at most
\[
F\left(\mu - 1\right) + 2F\left(\mu-\frac{1}{108}\mu^{2/3}\right) + 3\p(n).
\] 
\subparagraph*{Case 3: B3 was applied.} Let $w$ be a vertex from $V_3$ such that the set $N(u) \cap N(v) \cap N(w)$ is nonempty. Note that if we set $L(u)$ to $\lbrace a \rbrace$ and $L(v)$ to $\lbrace b \rbrace$, for $a \neq b$, then after applying the reduction rules common neighbors of $u$ and $v$ will have lists of size $1$, hence the size of the list of $w$ will be at most $2$. Therefore, in this case the running time is at most
\[
F\left(\mu - 1\right) + 6F\left(\mu-\mu^{2/3}\right) + 7\p(n).
\]
\subparagraph*{Case 4: B4 was applied.} Note that in the constructed instances, after applying the reduction rules, all vertices from $S \cup \widetilde{S} \cup \left( N(S) \cap N(\widetilde{S}) \right)$ have lists of size $1$, so all vertices dominated by $S \cup \widetilde{S} \cup \left( N(S) \cap N(\widetilde{S}) \right)$ have lists of size at most $2$.
Therefore, all instances that are solved recursively have at most $\mu - \frac{1}{6}\mu$ vertices with lists of size $3$. The total number of those instances can be upper bounded by
\[
3 \cdot \mu^{2 K \mu^{1/3}\log \mu}  \cdot 2^{K \mu^{1/3}\log \mu} < 2^{K' \mu^{1/3}\log^2 \mu},
\]
for some constant $K'$.
Therefore the total running time in this case is at most
\[
2^{K'  \mu^{1/3}\log^2 \mu} F\left(\frac{5}{6}\mu\right) + 2^{K' \mu^{1/3}\log^2 \mu}\p(n)
\]

As the considered cases cover all possibilities, we conclude that $F(\mu)$ is bounded by the maximum of the expressions obtained in all four cases. By solving this recurrence we obtain 
\[
F(\mu)\leq \p(n) \cdot 2^{\Oh\left(\mu^{1/3}\log ^2 \mu \right)}=2^{\Oh\left(\mu^{1/3}\log ^2 \mu \right)}.
\]
Since $\mu\leq n$, the proof is complete.
\end{proof}

\section{Possible extensions of our results}\label{sec:ext}
We conclude the paper with discussing possible extensions of our results.

\subsection{Solving \textsc{List $H$-Coloring} in small-diameter graphs}\label{sec:lhom}

For a fixed graph $H$ with possible loops, an instance of \lcol{$H$} is a pair $(G,L)$, where $G$ is a graph and $L : V(G) \to 2^{V(H)}$ is a list function.
We ask whether there exists a \emph{list homomorphism} from $(G,L)$ to $H$, i.e.,  a function $\phi : V(G) \to V(H)$, such that (i) for each $uv \in E(G)$ it holds that $\phi(u)\phi(v) \in E(H)$, and (ii) for each $v \in V(G)$ it holds that $\phi(v) \in L(v)$.
Clearly \lcol{$K_k$} is equivalent to \lcol{$k$}. This is why we refer to the vertices of $H$ as \emph{colors}.

We observe that the algorithm from  \cref{thm:diam3} and \cref{thm:main} can be adapted to \lcol{$H$} if the graph $H$ satisfies certain conditions.
First, the algorithm from \cref{thm:diam3} can be adapted to solve the \lcol{$H$} problem if
\begin{enumerate}[({P}1)]
\item every vertex of $H$ has at most two neighbors (possibly including itself, if it is a vertex with a loop). \label{prop:p1}
\end{enumerate}
For such graphs $H$, once we fix a color of some $v \in V(G)$, all its neighbors have lists of size at most 2.

To adapt the algorithm from \cref{thm:main}, in addition to property (P\ref{prop:p1}), we need two more:
\begin{enumerate}[({P}1)]
\setcounter{enumi}{1}
\item any two distinct vertices of $H$ must have at most one common neighbor, \label{prop:p2}
\item $H$ has no loops. \label{prop:p3}
\end{enumerate}
Property (P\ref{prop:p2}) is needed to ensure that as soon as we fix the coloring of the sets $S$ and $\widetilde{S}$ selected in \cref{lemma_technical}, then the color of every vertex in $N(S) \cap N(\widetilde{S})$ is uniquely determined.
Property (P\ref{prop:p3}) is needed for our selection of the set $\widetilde{S}$: recall that all these vertices are in the neighborhood of some vertex $v_a$ colored $a$, which is sufficient to ensure that no vertex of $\widetilde{S}$ gets the color $a$.

Let $\mathcal{H}$ be the family of connected graphs that satisfy property (P\ref{prop:p1}).
From the complexity dichotomy for \lcol{$H$} by Feder, Hell, and Huang~\cite{DBLP:journals/combinatorica/FederHH99,DBLP:journals/jgt/FederHH03} it follows that if $H \in \mathcal{H}$, then \lcol{$H$} is polynomial-time solvable if:
\begin{itemize}
\item $H$ has at most two vertices,
\item $H = C_4$,
\item $H$ is a path,
\item $H$ is a path with a loop on one endvertex,
\end{itemize}
and otherwise the problem is \NP-complete and does not admit a subexponential-time algorithm under the ETH.
So, in other words, there are two families of graphs $H \in \mathcal{H}$ for which the problem is \NP-complete (in general graphs):
\begin{itemize}
\item all cycles $C_k$ for $k =3$ or $k \geq 5$, and
\item all graphs obtained from a path with $k \geq 3$ vertices by adding loops on both endvertices; let us call such a graph $P^*_k$.
\end{itemize}
Let us present one more simple observation about solving \lcol{$H$} in graphs with small diameter.
Consider an instance $(G,L)$ of \lcol{$H$} and suppose that $H$ contains two vertices $x,y$ at distance greater than $\diam(G)$.
(Here, with a little abuse of notation, we use the convention that if $x$ and $y$ are in different connected components of $H$, then their distance is infinite.)
We note that there is no (list) homomorphism from $G$ to $H$ that uses both $x$ and $y$.
Thus we can reduce the problem to solving an instance $(G,L_x)$ of \lcol{$(H - x)$} and an instance $(G,L_y)$ of \lcol{$(H-y)$}, where lists $L_x$ (resp. $L_y$) are obtained from $L$ by removing the vertex $x$ (resp., $y$) from each set.

Combining all observations above, we obtain the following results. We skip the formal proofs, as they are essentially the same as the ones of \cref{thm:diam3} and \cref{thm:main} and bring no new insight.
\begin{theorem}
Let $H \in \mathcal{H}$. Consider an instance $(G,L)$ of \lcol{$H$}, where $G$ is of diameter $2$.
Then $(G,L)$ can be solved 
\begin{enumerate}
\item in polynomial time if $H \notin \{C_3,C_5,P^*_3\}$,
\item in time $2^{\Oh(n^{1/3} \log^{2} n)}$ if $H \in \{C_3,C_5\}$,
\item in time $2^{\Oh(n^{1/2} \log^{1/2} n)}$ if $H = P^*_3$.
\end{enumerate}
\end{theorem}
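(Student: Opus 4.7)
The plan is to shrink $H$ to one of three irreducible ``hard'' targets of diameter at most $2$, and then handle each of them with a slight variant of an algorithm that already appears in the paper.

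\emph{Shrinking $H$.} Whenever $H$ contains two vertices $x,y$ with $\dist_H(x,y)\geq 3$ (treating the distance between different components as $+\infty$), I would invoke the splitting observation stated immediately before the theorem to branch into the two instances $(G,L_x)$ of \lcol{$(H-x)$} and $(G,L_y)$ of \lcol{$(H-y)$}. Because $|V(H)|$ is a fixed constant, this produces only $\Oh(1)$ overhead. After additionally processing each connected component of $G$ separately and guessing which connected component of the residual target it maps to, we may assume that $G$ is connected and the residual target $H'$ is itself a connected member of $\mathcal{H}$ with $\diam(H')\leq 2$. Listing the connected graphs in $\mathcal{H}$ of diameter at most $2$ and intersecting with the Feder--Hell--Huang list of NP-hard targets leaves only $C_3$, $C_5$, and $P^*_3$; for every other $H'$ the residual problem is polynomial, which establishes Case~1.

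\emph{The cases $H'=C_3$ and $H'=C_5$.} Both graphs satisfy properties (P\ref{prop:p1})--(P\ref{prop:p3}): maximum degree two, any two distinct vertices share at most one common neighbor, and there are no loops. I would therefore rerun the proof of \cref{thm:main} almost verbatim, with colors drawn from $V(H')$ and with ``the most frequent color $a$'' now producing a class of size at least $n/|V(H')|\geq n/5$. The branching rules B1--B4 remain sound because, by (P\ref{prop:p2}) and (P\ref{prop:p3}), the color of each vertex in $N(S)\cap N(\widetilde{S})$ is still uniquely determined once $S$ is colored $a$ and $\widetilde{S}$ is colored with colors from $V(H')\setminus\{a\}$. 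The probabilistic construction in \cref{lemma_technical} uses only the lower bound $|A|\geq n/|V(H')|$, so replacing $1/3$ by $1/5$ affects only multiplicative constants, and the recurrence still solves to $F(\mu)=2^{\Oh(\mu^{1/3}\log^2\mu)}$. The only non-cosmetic change is that reduction rule R3 must be replaced by the standard polynomial-time reduction of \lcol{$H'$}-instances with all lists of size at most $2$ to $2$-\textsc{Sat}, which is valid for any fixed $H'$.

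\emph{The case $H'=P^*_3$ and the main obstacle.} The two loops destroy (P\ref{prop:p3}): a set $\widetilde{S}\subseteq N(v_a)$ can no longer be forced to avoid the color $a$, so the $n^{1/3}$-argument collapses. Instead I would replay the Mertzios--Spirakis strategy from \cref{thm:MSdiam2}. Since $\diam(G)=2$, the open neighborhood of any vertex dominates $V(G)$, so combined with the classical $\Oh(\tfrac{n\log\delta}{\delta})$ domination bound this gives a dominating set $D$ of size $\Oh(\sqrt{n\log n})$. Guessing the coloring of $D$ produces $3^{|D|}=2^{\Oh(\sqrt{n\log n})}$ instances in which, by (P\ref{prop:p1}), every remaining list has size at most two; each is then solved in polynomial time by a direct reduction to $2$-\textsc{Sat}. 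The main delicate point in the whole argument is the second step, where one must carefully check that both the estimate on the number of ``threatening'' vertices and the $\tfrac{1}{6}$-domination ratio of \cref{lemma_technical} survive once the three-element palette is replaced by $V(C_5)$; a quick inspection shows that all changes are absorbed into multiplicative constants depending only on $|V(H')|$.
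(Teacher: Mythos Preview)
Your proposal is correct and follows essentially the same route as the paper, which only sketches the argument: reduce $H$ via the distance observation to a connected target in $\mathcal{H}$ of diameter at most~$2$, observe that the only NP-hard such targets are $C_3$, $C_5$, and $P^*_3$, apply the adaptation of \cref{thm:main} (using (P\ref{prop:p1})--(P\ref{prop:p3})) for the two loopless cycles, and fall back to a $2^{\Oh(\sqrt{n\log n})}$ algorithm for $P^*_3$ where (P\ref{prop:p3}) fails. The one cosmetic difference is that for $P^*_3$ you invoke the Mertzios--Spirakis dominating-set argument (\cref{thm:MSdiam2}) directly, whereas the paper points to the diameter-$2$ case of its own branching algorithm (the proof of \cref{thm:diam3}); both need only (P\ref{prop:p1}) and give the same bound. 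Your remark that the constants in \cref{lemma_technical} and in the thresholds of B1--B3 must be rescaled when the palette grows from $3$ to $|V(H')|$ is exactly the adjustment the paper leaves implicit.
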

\begin{theorem}
Let $H \in \mathcal{H}$. Consider an instance $(G,L)$ of \lcol{$H$}, where $G$ is of diameter $3$.
Then $(G,L)$ can be solved 
\begin{enumerate}
\item in polynomial time if $H \notin \{C_3,C_5,C_6,C_7,P^*_3,P^*_4\}$,
\item in time $2^{\Oh(n^{2/3} \log^{2/3} n)}$ if $H \in \{C_3,C_5,C_6,C_7,P^*_3,P^*_4\}$.
\end{enumerate}
\end{theorem}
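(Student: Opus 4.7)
The plan is to combine the splitting observation from earlier in this section with an \lcol{$H$} adaptation of \cref{thm:diam3}. First I would invoke the Feder--Hell--Huang dichotomy to isolate the targets $H \in \mathcal{H}$ for which \lcol{$H$} is \NP-hard in general graphs: these are exactly the cycles $C_k$ for $k = 3$ or $k \geq 5$ and the doubly looped paths $P^*_k$ for $k \geq 3$. For every other $H \in \mathcal{H}$ the polynomial-time claim of the theorem is immediate from the dichotomy.

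Next I would apply the splitting trick already described in this section: whenever $H$ contains two vertices $x,y$ with $\dist_H(x,y) > \diam(G) = 3$, the instance reduces to the two instances $(G,L_x)$ of \lcol{$(H-x)$} and $(G,L_y)$ of \lcol{$(H-y)$}, because no list homomorphism can use both $x$ and $y$. Within $\mathcal{H}$, deleting any vertex of $C_k$ yields a path, and deleting a loop endpoint of $P^*_k$ yields a path with a loop on exactly one endpoint; both are polynomial cases of the dichotomy. Since $\diam(C_k) = \lfloor k/2 \rfloor$ and $\diam(P^*_k) = k-1$, a witnessing pair $x,y$ exists whenever $H$ is a cycle of length at least $8$ or a doubly looped path on at least $5$ vertices, so the only \NP-hard members of $\mathcal{H}$ whose diameter is at most $3$ are precisely $C_3, C_5, C_6, C_7, P^*_3, P^*_4$.

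For each of these six remaining targets I would run a direct adaptation of \cref{thm:diam3}. Define the layer structure $V_i := \{v \in V(G) : |L(v)| = i\}$, write $V_{\geq 2}$ and $V_{\geq 3}$ for the obvious unions, and take any measure $\mu$ that strictly decreases when any list shrinks (for instance $\mu := \sum_v |L(v)|$). Property (P\ref{prop:p1}) ensures that fixing the color of any vertex shrinks the list of each of its neighbors to size at most $2$, so the two-case analysis of \cref{thm:diam3} carries over: either some vertex of $V_{\geq 2}$ has more than $(\mu\log\mu)^{1/3}$ neighbors in $V_{\geq 2}$ and we branch on its color (with only $|V(H)| = \Oh(1)$ choices), or $\Delta(G[V_{\geq 2}]) < (\mu\log\mu)^{1/3}$ and the natural extension of \cref{cor:neighborhoods} to this layer structure gives, for each $v \in V_{\geq 3}$, a set $N^{\leq 2}_{G[V_{\geq 2}]}[v]$ of size $\Oh((\mu\log\mu)^{2/3})$ that dominates $V_{\geq 3}$; exhaustively guessing its coloring leaves all lists of size at most $2$ and the residual instance is solved in polynomial time by the standard reduction of short-list \lcol{$H$} to 2-\textsc{Sat}. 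Solving the recurrence yields the advertised $2^{\Oh(n^{2/3}\log^{2/3} n)}$ bound.

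The only point needing attention beyond bookkeeping is that \cref{prop:distance} and its corollary \cref{cor:neighborhoods} were formulated for the three-level layer structure of \lcol{3}, so I would verify that their proofs go through when "list of size $3$" is replaced by "list of size at least $3$". Inspecting the argument, the only property of $V_1$ that is actually used is $N(V_1) \subseteq V_2$, which is guaranteed by reduction rule R1 together with (P\ref{prop:p1}) for any $H \in \mathcal{H}$, so the adaptation is routine. Aside from this observation, the proof is a line-for-line transcription of \cref{thm:diam3}, which is why it can be safely omitted.
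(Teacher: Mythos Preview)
Your proposal is correct and follows precisely the route the paper intends (and explicitly declines to spell out): invoke the Feder--Hell--Huang dichotomy, use the distance-splitting observation to reduce every hard $H \in \mathcal{H}$ of diameter larger than $3$ to a polynomial target, and run the adaptation of \cref{thm:diam3} on the six remaining targets. Your identification of the only non-trivial structural point---that \cref{prop:distance} and \cref{cor:neighborhoods} generalize because R1 together with (P\ref{prop:p1}) forces $N(V_1) \subseteq V_{\leq 2}$---is exactly right.

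One small caveat concerns the branching rule. The original proof of \cref{thm:diam3} does \emph{not} branch on all colors of $v$; it finds a single color $a \in L(v) \cap L'$ (which exists because any two $2$-element subsets of a $3$-element set intersect) and branches on ``$v$ gets $a$'' versus ``$v$ does not get $a$'', yielding the recurrence $F(\mu) \leq F(\mu - \Omega(t)) + F(\mu - 1)$. For $|V(H)| > 3$ this intersection argument fails, and your replacement ``branch on all $\leq |V(H)|$ colors of $v$'' does not automatically give a usable recurrence: if many neighbours of $v$ share a $2$-element list $L' = \{b_1,b_2\}$, then the branch $c$ leaves those lists untouched precisely when $N_H(c) = \{b_1,b_2\}$, and with (P\ref{prop:p1}) alone there can be two such $c$, which would make the recurrence $F(\mu) \leq 2F(\mu-1) + \ldots$ and blow up. What saves you is that each of $C_3,C_5,C_6,C_7,P^*_3,P^*_4$ also satisfies (P\ref{prop:p2}), so there is at most one such bad branch; then all but one branch drop $\mu$ by $\Omega((\mu\log\mu)^{1/3})$ and the remaining branch drops it by at least $|L(v)|-1 \geq 1$, recovering a recurrence of the original shape. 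This is a one-line fix, but it should be stated rather than absorbed into ``the analysis carries over''.
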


\subsection{Weighted coloring problems}

Another possible generalization of \lcol{3} would be to introduce weights: for each pair $(v,c)$, where $v \in V(G)$ and $c \in \{1,2,3\}$, we are given a cost $\wei(v,c)$ of coloring $v$ with $c$, and we ask for a proper coloring minimizing the total cost. A natural special case of this problem is \textsc{Independent Odd Cycle Transversal}, where we ask for a minimum-sized independent set which intersects all odd cycles.

Let us point out that the branching phases in our algorithms from \cref{thm:diam3} and \cref{thm:main} can handle this type of modification.
However, this is no longer the case for the last phase, when the problem of coloring a graph with all lists of size at most two is reduced to 2-\textsc{Sat} using \cref{thm:edwards}. It is known that a weighted variant of 2-\textsc{Sat} is \NP-complete and admits no subexponential-time algorithm, unless the ETH fails~\cite{DBLP:journals/amai/Porschen07}.
Thus, in order to extend our algorithmic results to weighted setting, we need to find a way to replace using \cref{thm:edwards} with some other strategy of dealing with lists of size 2.

%\newpage
\bibliography{main}
\end{document}